\setlist{nolistsep,itemsep=\parskip,labelindent=\parindent,leftmargin=0pt,font=\bfseries}
\newcommand{\dd}[1]{\mathop{\mathrm{d}#1}}
\DeclareMathOperator{\Sym}{\mathrm{Sym}}
\newcommand{\e}{\mathrm{e}}
\DeclareMathOperator{\Tr}{\mathrm{Tr}} 
\DeclareMathOperator{\Perm}{\mathrm{Perm}}
\DeclareMathOperator{\Erfc}{\mathrm{Erfc}}
\DeclareMathOperator{\Erf}{\mathrm{Erf}}
\DeclareMathOperator{\supp}{\mathrm{supp}}
\DeclareMathOperator{\poly}{\mathrm{poly}}
\newcommand{\mc}[1]{\mathcal{#1}}
\newcommand{\mcA}{\mc{A}}
\newcommand{\mcH}{\mc{H}}
\newcommand{\mcB}{\mc{B}}
\newcommand{\mcS}{\mc{S}}
\newcommand{\mcP}{\mc{P}}
\newcommand{\mcQ}{\mc{Q}}
\newcommand{\mcD}{\mc{D}}
\newcommand{\mb}[1]{\mathbb{#1}}
\newcommand{\N}{\mb{N}}
\newcommand{\Z}{\mb{Z}}
\newcommand{\R}{\mb{R}}
\newcommand{\C}{\mb{C}}
\newcommand{\1}{\mb{I}}
\DeclareMathOperator*{\E}{\mathbb{E}}
\newcommand{\ket}[1]{\left|{#1}\right\rangle}
\newcommand{\bra}[1]{\left\langle{#1}\right|}
\DeclareMathOperator{\landauO}{O}
\renewcommand{\k}{N}
\newcommand{\Sr}[3][]{
  \ifthenelse{\equal{#1}{}}
    {S(#2\|#3)}
    {S_{#1}(#2\|#3)}
}
\newtheorem{theorem}{Theorem}
\newtheorem{lemma}[theorem]{Lemma}
\newtheorem{definition}[theorem]{Definition}
\newtheorem{observation}[theorem]{Observation}
\newcommand{\affiliation}[1]{\def\@affiliation {#1}}
\renewenvironment{abstract}{\small}{}
\renewcommand{\maketitle}{%
  \thispagestyle{empty}

  \vspace*{2cm}

  {\onehalfspacing\Large\usekomafont{title}\noindent\@title\par}
  \vspace{4mm}

  {\usekomafont{title}\large \noindent\@author}\par
  \vspace{2mm}
  {\small \noindent\@affiliation}\par
  \vspace{2mm}
  {\small \noindent\@date}\par
  \vspace{4mm}
}
\begin{document}

\pdfinfo{ 
	/Author (C.\ Gogolin, M.\ Kliesch, L.\ Aolita, and J.\ Eisert) 
	/Title (Boson-Sampling in the light of sample complexity) 
	/Subject (03.67.-a, 89.70.Eg, 05.30.Jp, 42.50.-p)
	/Keywords (boson-sampling, sample complexity, quantum simulation, certification, verification, state discrimination, birthday paradox, linear quantum optics)
	} 


\title{Boson-Sampling in the light of sample complexity}
\author{C.\ Gogolin, M.\ Kliesch, L.\ Aolita, and J.\ Eisert}
\affiliation{Dahlem Center for Complex Quantum Systems, Freie Universit{\"a}t Berlin, 14195 Berlin, Germany}
\date{September 16, 2013}

\maketitle

\begin{abstract}
  Boson-Sampling is a classically computationally hard problem that can --- in principle --- be efficiently solved with quantum linear optical networks.
  Very recently, a rush of experimental activity has ignited with the aim of developing such devices as feasible instances of quantum simulators.
  Even approximate Boson-Sampling is believed to be hard with high probability if the unitary describing the optical network is drawn from the Haar measure.
  In this work we show that in this setup, with probability exponentially close to one in the number of bosons, no symmetric algorithm can distinguish the Boson-Sampling distribution from the uniform one from fewer than exponentially many samples.
  This means that the two distributions are operationally indistinguishable without detailed a priori knowledge.
  We carefully discuss the prospects of efficiently using knowledge about the implemented unitary for devising non-symmetric algorithms that could potentially improve upon this.
  We conclude that due to the very fact that Boson-Sampling is believed to be hard, efficient classical certification of Boson-Sampling devices seems to be out of reach.
\end{abstract}

\section{Introduction}
\label{sec:introduction}
Quantum information theory suggests that it should be possible to design physical devices performing information processing tasks that cannot be classically efficiently simulated.
The most spectacular example of this type known to date is a fully-fletched Shor-class quantum computer, able to factorize numbers efficiently, hence solving a practically relevant problem for which no classical efficient algorithm is known \cite{Shor}.
Needless to say, the actual physical realisation of such a device is extraordinarily difficult for a number of reasons, the difficulty of protecting quantum systems from the unwanted effects of decoherence being only one of them.
In the light of this observation, it has become a very important milestone to identify devices that can solve some problem that seems impossible to be realised classically, or --- in the wording of a blog entry \cite{Preskill} ---  to achieve ``quantum supremacy''.
This is a challenge equally interesting for experimentalists as well as for theorists: On one hand, it surely is still very difficult to achieve the necessary degree of control, on the other hand, it is a challenge for complexity theorists and theoretical computer scientists to show that a task at hand is computationally hard.

A seminal theoretical step in this direction has recently been achieved with the introduction of the \emph{Boson-Sampling problem} \cite{Aaronson}. 
In this problem, the task is the following:
Given as input the unitary $U$, the number of modes $m$, and the number of photons $n$, together describing a quantum linear optical device (see Fig.~\ref{fig:bosonsamplingproblem}), sample from the output distribution of this device.
Ref.~\cite{Aaronson} establishes strong reasons to believe that classically sampling from this distribution up to a small error in 1-norm is computationally hard with high probability if the unitary $U$ is chosen from the Haar measure and $m$ is scaled suitably with $n$.
The hardness proof rests on the fact that approximating the probabilities of individual outcomes of such a device basically amounts to approximating the permanent of a submatrix of $U$ \cite{Scheel}, which, by a plausible complexity theoretic conjecture, is believed to be $\#P$ hard.
The main result of Ref.~\cite{Aaronson} suggests that a 1-norm approximate efficient classical simulation of Boson-Sampling would imply a collapse of the polynomial hierarchy to the third level (compare also Ref.~\cite{Bremner}).
This has triggered a rush of exciting experimental activity \cite{E1,E2,E3,E4}, aiming at realizing instances of Boson-Sampling, accompanied by theoretical discussions about what errors one should expect in such quantum linear optical experiments \cite{Rohde}.

In view of all this, a crucial question that arises is how to certify that a given experiment does actually solve the desired sampling problem, and how many repetitions of the experiment, i.e., samples from its output distribution, are needed for the certification.
In contrast to a machine that is efficiently factoring large numbers and hence solves a problem in NP, i.e., produces an output that can be checked efficiently on a classical computer, no efficient certification scheme for Boson-Sampling is known and it is not clear whether such a scheme can exist.

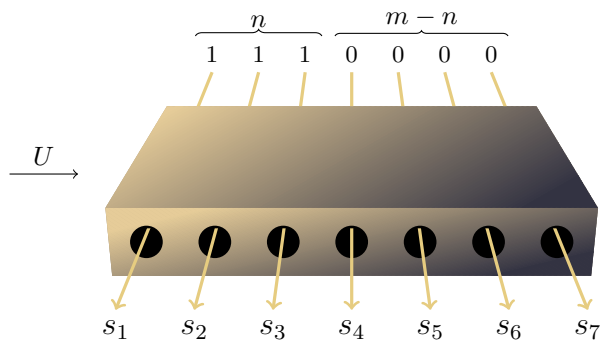
\begin{figure}[t]
  \centering 
  \begin{tikzpicture}[scale=0.9]
  \definecolor{box1}{rgb}{.9,.8,.6}
  \definecolor{box2}{rgb}{.2,.2,.26}
  \definecolor{bosonline}{rgb}{0.9,.8,.5}
  \def\xsizefront{3.6}
  \def\xsizeback{2.7}
  \def\xsizefrontbottom{3.5}
    \begin{scope}[top color=box1,bottom color=box2]
      \fill[box1!70!box2] (-\xsizefront,0.5) -- (\xsizefront,0.5) -- (\xsizeback,2) -- (-\xsizeback,2) ;
      \fill[box1!50!box2] (-\xsizefront,0.5) -- (\xsizefront,0.5) -- (\xsizefrontbottom,-0.5) -- (-\xsizefrontbottom,-0.5) ;      
      \shade[shading=axis,shading angle=70] (-\xsizefront,0.5) -- (\xsizefront,0.5) -- (\xsizeback,2) -- (-\xsizeback,2) ;
      \shade[shading=axis,shading angle=60] (-\xsizefront,0.5) -- (\xsizefront,0.5) -- (\xsizefrontbottom,-0.5) -- (-\xsizefrontbottom,-0.5) ;      
    \end{scope}
        \foreach \x/\j/\s in {-3/1/1,-2/2/1,-1/3/1,0/4/0,1/5/0,2/6/0,3/7/0} {
        \node[fill,circle,minimum size=0.43cm] at (\x,0) {};
        \begin{scope}
          \path[clip] (4,-2) -- (4,0.2) -- (\xsizefrontbottom,0.2) -- (-\xsizefrontbottom,0.2) -- (-\xsizefront,0.5) -- (-\xsizeback,2) -- (\xsizeback,2) -- (\xsizefront,0.5) -- (\xsizefrontbottom,0.2) -- (4,0.2) -- (4,3) -- (-4,3) -- (-4,-2);
          \draw[->,bosonline,very thick] (\x*.68,2.5) -- (\x*1.15,-1) node[at start,anchor=south,black] (ss\j)  {{\small $\s$}} node[at end,anchor=north,black] (s\j) {{\large $s_\j$}};
        \end{scope}
      }
      \draw[decorate,decoration=brace] (ss1.north west) -- (ss3.north east) node[midway,above] {$n$};
      \draw[decorate,decoration={brace,amplitude=2pt}] (ss4.north west) -- (ss7.north east) node[midway,above] {$m-n$};
      \draw[->] (-5,1) -- (-4,1) node[midway,above] {$U$};
    \end{tikzpicture}

    \caption{\label{fig:bosonsamplingproblem}The Boson-Sampling quantum device receives as input a unitary $U$, applies it to $m$ bosonic modes initialized with exactly one boson in each of the first $n$ modes and the vacuum in the remaining $m-n$ modes and outputs the results $(s_1,\dots, s_m)$ of local boson number measurements.
      The Boson-Sampling problem is to sample from the output distribution of this device where $U$ is fixed and chosen in the beginning from the Haar measure on $U(m)$.}
\end{figure}

As a first step towards a better understanding of the difficulty of certifying sampling devices we consider the task of deciding whether a device outputs samples from some given interesting probability distribution, for example the Boson-Sampling distribution, or the uniform one.
We approach this decision problem in two complementary settings that differ in the amount of information the certifier is allowed to use besides the samples output by the device.

\begin{description}
\item[State discrimination] (known probability distributions): 
  The certifier works under the assumption that the sampling device outputs independent identically distributed samples and that the output distribution is either of two completely known distributions (for example one of them being the uniform one), i.e., he has access to all the probabilities.
\item[Black box setting] (unknown probability distribution):
  The certifier works under the assumption that the sampling device outputs independent identically distributed samples, but has no a priori knowledge about the output distribution apart from the sample space.
\end{description}

We show that known bounds on the sample complexity imply that in the first setting a number of samples scaling polynomially 
with the number of bosons is sufficient to guarantee distinguishability of the Boson-Sampling distribution from the uniform one.
Notably, this does not imply that polynomially many samples are sufficient to certify that a given device samples from the correct distribution, even if unlimited computational power and full a priori knowledge about $m$, $n$, and $U$ implemented by the supposed device are assumed.
The reason for this is simply that being able to decide which of two given distributions a device samples from under the promise that it does indeed sample from either of the two, does not necessarily imply that one is also able to exclude that the device samples from any distribution outside of a small region around the target distribution from the same number of samples.

The complexity theoretic conjecture under which Boson-Sampling is a hard sampling problem, namely that it is expected to be $\#P$ hard to approximate the permanent, implies that approximating the probabilities of the individual outputs of a Boson-Sampling device is also computationally hard.
A classical certifier with limited computational power will hence have only very limited knowledge about the ideal output distribution of a supposed Boson-Sampling device.
The state discrimination scenario is thus far from realistically capturing the challenge of classically efficiently certifying a real Boson-Sampling device in the laboratory.

The realistic situation much more closely resembles the black box setting (we will discuss this in more detail in Section~\ref{sec:bosonsamplinginthecontextofsamplecomplexity}).
In this setting the certifier has no a priori knowledge about the output distribution.
It is hence reasonable to demand that its decision should only depend on how frequent the outcomes appear.
That is to say, knowing nothing about the probability distribution, the labels of the collected samples don't mean anything to the certifier, hence they should not influence his decision.

We formalize this in the notion of \emph{symmetric probabilistic decision algorithms} and show that such algorithms can give meaningful outputs only if they receive sufficiently many samples.
This is true not only for the task of distinguishing some distribution from the uniform one, but also in more general settings involving multiple sampling devices.
The number of samples necessarily depends on how \emph{flat} the distribution(s) are.
We call a probability distribution over a finite samples space \emph{$\epsilon$-flat} if the probability of the most likely outcome is upper bounded by $\epsilon$ or, equivalently, if its min entropy is larger than $\log_2 1/\epsilon$.
More precisely, we show that the output distribution of any symmetric probabilistic algorithm receiving at most $\landauO((1/\epsilon)^{1/4})$ samples from each of $\k$ sampling devices is with probability $1- \landauO(\k^2 \sqrt{\epsilon})$ independent of the distributions of the sampling devices if the distributions are $\epsilon$-flat (Theorem~\ref{theorem:symmetricalgorithmsandflatdistributions}).

We then show that the Boson-Sampling distribution is, for the interesting parameter regions and if $U$ is chosen from the Haar measure, with overwhelmingly high probability exponentially flat (Theorem~\ref{theorem:bosonsamplingdistributionisflat}).
Together, our findings imply that in the black box setting distinguishing the Boson-Sampling distribution from the uniform one requires exponentially many samples.

We emphasize that our analysis applies to the ideal situation without any experimental imperfections.
It is important to mention that our results concerning the flatness, just like the hardness proof of Ref.~\cite{Aaronson}, is probabilistic, in the sense of holding with an extremely high probability if the unitary describing the optical circuit is randomly chosen from the Haar measure and if $n$ and $m$ are sufficiently large.
To end up with, we identify a class of imperfect linear optical experimental situations for which one can classically efficiently sample from the output distribution even up to a constant small error in 1-norm.

The rest of this work is organized as follows.
First, in Section~\ref{sec:setting} we introduce the notation and recapitulate the setting considered in the Boson-Sampling problem.
Next, in Section~\ref{sec:bosonsamplinginthecontextofsamplecomplexity} we connect the problem of certifying a sampling devices with the decision problem of distinguishing its output distribution from the uniform distribution, explain the state discrimination and the black box setting in more detail, give upper and lower bounds on the sample complexity of this task and discuss what they mean for the original question of certifying Boson-Sampling.
Section~\ref{sec:upperboundsonthesamplecomplexityinthestatediscriminationsetting} and \ref{sec:samplecomplexityintheblackboxsetting} contain our technical results concerning the sample complexity in the state discrimination and black box setting respectively.
In Section~\ref{sec:thebosonsamplingdistributionisflat} we show that, with very high probability, the Boson-Sampling distribution is exponentially flat (Theorem~\ref{theorem:bosonsamplingdistributionisflat}).
Finally, in Section~\ref{sec:efficientsimulatableinstances} we identify a class of sampling experiments that, due to experimental imperfections, are classically efficiently simulatable in 1-norm.

\section{Setting and notation}\label{sec:setting}
We use the Landau symbols $\landauO$ and $\Omega$ for asymptotic upper and lower bounds.
Moreover, we employ the short hand notation $[j] \coloneqq \{1,\dots,j\}$ for $j \in \Z^+$.
The 1-norm and $\infty$-norm on (probability) vectors are denoted by $\|\cdot\|_1$ and $\|\cdot\|_\infty$.

We consider the output probability distribution $\mcD_U$ of the \emph{Boson-Sampling device} with $n\geq1$ bosons in $m \in \poly(n)$ modes, given by \cite{Aaronson,Scheel}
\begin{equation} \label{eq:bosonsamplingdistribution}
  \Pr_{\mcD_U} \left[S\right] \coloneqq |\bra{1_n} \varphi(U) \ket S |^2 = \frac{|\Perm(U_S)|^2}{\prod_{j=1}^m (s_j!)},
\end{equation}
with $S \in \Phi_{m,n}$ being the output sequence of the Boson-Sampling device where
\begin{equation}
  \Phi_{m,n} \coloneqq \Big\{ (s_1,\dots,s_m) : \sum_{j=1}^m s_j = n \Big\} 
\end{equation}
is the sample space. The state vector $\ket S$ is the Fock space vector corresponding to $S$, $\ket{1_n}$ the initial state vector of the Boson-Sampling device with $1_n \coloneqq (1,\dots, 1,0,\dots, 0)$, $\varphi(U)$ the Fock space (metaplectic) representation of the implemented unitary.
The unitary matrix $U \in U(m)$ is the corresponding unitary in mode space, transforming vectors of bosonic operators.
In turn, $U_S \in \C^{n\times n}$ is the matrix constructed from $U$ by discarding all but the first $n$ columns of $U$ and then, for all $j \in [m]$, taking $s_j$ copies of the $j^{\text{th}}$ row of that matrix (deviating from the notation of Ref.~\cite{Aaronson}).
The permanent $\Perm$ is defined similarly to how the determinant can be defined via the Leibniz formula, but without the alternating sign.

We refer to $\mcD_U$ as the \emph{Boson-Sampling distribution}.
The \emph{Boson-Sampling problem} is: given as input to the algorithm $n$, $m$, and $U$, 
sample exactly or approximately from $\mcD_U$.
We will also consider the \emph{post-selected Boson-Sampling distribution} $\mcD^*_U$ which is obtained from $\mcD_U$ by discarding all output sequences $S$ with more than one boson per mode, i.e., all $S$ which are not in the set of \emph{collision-free} sequences
\begin{equation}
  \Phi^*_{m,n} \coloneqq \Big\{ S \in \Phi_{m,n} : \forall s \in S : s \in \{0,1\} \Big\} .
\end{equation}
For the relevant scalings of $m$ with $n$ the post-selection can be done efficiently in the sense that on average at least a constant fraction of the outcome sequences is collision-free (Theorem~13.4 in Ref.~\cite{Aaronson}).

The main result of Ref.~\cite{Aaronson} is that under reasonable complexity theoretic conjectures, 1-norm approximate Boson-Sampling, i.e., sampling from a distribution that is close to the Boson-Sampling distribution $\mcD_{U}$ in 1-norm, is computationally hard, with high probability if the unitary $U$ is chosen from the Haar measure $\mu_H$, which we denote by $U \sim \mu_H$, and $m$ increases sufficiently fast with $n$.
In fact, the hardness result of approximate Boson-Sampling requires that $m \in \Omega(n^5)$, but it is conjectured that $m$ growing faster than $n^2$ is sufficient.
Importantly, the proof of this result considers only collision-free output sequences, so in fact approximately sampling from $\mcD^*_U$ is already hard and the hardness argument for Boson-Sampling only uses the structure of the distribution $\mcD_U$ on $\Phi^*_{m,n}$.

We will repeatedly use that the size
\begin{equation}
  |\Phi_{m,n}| = \binom{m+n-1}{n}
\end{equation}
of the sample space of Boson-Sampling, which grows faster than exponentially with $n$, fulfills the following bound: 
Let $m \leq c\,n^{\nu}$ for some $\nu\geq1$ and $c\geq0$, then 
\begin{align} \label{eq:samplespaceishuge}
  |\Phi_{m,n}| & \leq \frac{(m+n-1)^n}{n!} \leq \left(\frac{(m+n-1)\,\e}{n}\right)^n \\
  &\leq \e^n\,(c\,n^{\nu-1} + 1 - 1/n )^n \leq  (2\,(c+1)\,\e)^n\,n^{(\nu-1)\,n} . \label{eq:Phibound}
\end{align}

\section{Boson-Sampling in the context of sample complexity}
\label{sec:bosonsamplinginthecontextofsamplecomplexity}
Our aim is to understand if and how an experimental implementation of Boson-Sampling can be certified to sample from the correct distribution.
In particular, we address the question of whether a certification can be achieved from the samples output by the device only.
This is natural because Boson-Sampling is an abstract sampling problem with a classical input (the number of bosons $n$, the number of modes $m$ the unitary matrix $U$) and a classical output (samples from $\Phi_{m,n}$).
The sampling problem as such is independent of the particular physical implementation.
We call all information about a claimed Boson-Sampling device that is in principle available to the certifier in addition to the samples \emph{a priori knowledge}.

The problem of certifying a device can be formalized as a decision problem.
Whether a decision can be reached can be expressed as a statement about the existence or non existence of an algorithm which, possibly using parts or all of the a priori knowledge, and given a set of samples, accepts (device certified) in the ``good'' situation with probability at least $2/3$ and in the bad instances rejects (device not certified) with probability at least $2/3$.
The probabilities to erroneously reject in a ``good'' instance (accept in a ``bad'' instance) are called errors of the first (second) kind.
If such an algorithm exists we say it decides the problem, if no such algorithm exists we say that the problem can not be decided.
If a sampling problem has a natural problem size, like the number of bosons $n$ in the case of Boson-Sampling, it is natural to consider the scaling of the number of samples needed such that a deciding algorithm exists as a function of this problem size.
The order of the number of samples needed by an algorithm is called its \emph{sample complexity}.
The sample complexity of a decision problem in turn is the minimal sample complexity of any algorithm that decides the problem.
The choice of the value $2/3$ for the accept/reject probabilities is purely conventional.
Any other constant finite bias in the accept/reject probabilities can be amplified to values arbitrarily close to one without changing the sample complexity.

The main hardness result of Ref.~\cite{Aaronson} covers all distributions that are 1-norm close to the ideal Boson-Sampling distribution.
Hence, an algorithm that certifies a Boson-Sampling device must necessarily reject with probability at least $2/3$ whenever the device samples from a distribution further away than some small constant distance in 1-norm and it is desirable that it accepts with probability at least $2/3$ if the device samples from the ideal Boson-Sampling distribution.
Such an algorithm must hence at least be able to decide whether a given device samples from the ideal Boson-Sampling distribution or the uniform distribution over the same sample space.

In the state discrimination setting, the sample complexity of this task is of order $\landauO(n^3)$ (Theorem~\ref{theorem:samplecomplexityinthestatediscriminationsetting}), but it is certainly not realistic to assume that the certifier asked to decide this question has full knowledge of the ideal Boson-Sampling distribution.
After all, it is the very point of Boson-Sampling that approximating the probabilities of individual outcomes is a computationally hard problem.
It is therefore important to investigate the sample complexity of this task under more reasonable restrictions on the a priori knowledge and computational power of the certifier.

Colloquially speaking, our results on this problem, which are formally stated below, give rise to a rather ironic situation:
Instead of building a device that implements Boson-Sampling, for example by means of a quantum optical experiment, one could instead simply program a classical computer to efficiently sample from the uniform distribution over $\Phi^*_{m,n}$ and claim that the device samples from the post-selected Boson-Sampling distribution $\mcD^*_U$ for some $U$.
If one chooses $U$ from the Haar measure the chances of being caught cheating becomes significantly large only after one was asked for exponentially many samples.
This implies that the findings of any experimental realisation of Boson-Sampling have to be interpreted with great care, as far as the  notion ``quantum supremacy'' is concerned.

To be precise, our main result is a lower bound on the sample complexity of distinguishing the post selected Boson-Sampling distribution from the uniform one for symmetric probabilistic algorithms.
We will give a precise definition of symmetric probabilistic algorithms in Section~\ref{sec:samplecomplexityintheblackboxsetting} (Definition~\ref{definition:symmetricprobabilisticalgorithm}), but essentially a probabilistic decision algorithm is called symmetric if its output distribution is invariant under relabeling the elements of the sample space.
\begin{theorem}[Distinguishing the post selected Boson-Sampling distribution from the uniform one] \label{theorem:mainresultone}
  If $U \sim \mu_H$, i.e., $U$ is drawn from the Haar measure, and $m \in \Omega(n^\nu)$ with $\nu>2$, then with probability supra-exponentially small in $n$ no symmetric probabilistic algorithm 
  can distinguish the post-selected Boson-Sampling distribution $\mcD^*_U$ from the uniform distribution on $\Phi^*_{m,n}$ from fewer than $\Omega(\e^{n/2})$ many samples.
\end{theorem}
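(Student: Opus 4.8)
The plan is to obtain Theorem~\ref{theorem:mainresultone} as a corollary of the two main technical results, Theorem~\ref{theorem:symmetricalgorithmsandflatdistributions} and Theorem~\ref{theorem:bosonsamplingdistributionisflat}, by choosing the flatness parameter $\epsilon$ so that the permitted number of samples $\landauO((1/\epsilon)^{1/4})$ becomes $\landauO(\e^{n/2})$. Concretely, I would aim for $\epsilon$ of order $\e^{-2n}$, since then $(1/\epsilon)^{1/4}$ is of order $\e^{n/2}$, while the error term $\landauO(\k^2\sqrt{\epsilon})$ of Theorem~\ref{theorem:symmetricalgorithmsandflatdistributions}, with the single device relevant here ($\k=1$), is of order $\e^{-n}$ and hence exponentially small.

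First I would check that both candidate distributions are $\epsilon$-flat for this $\epsilon$. For the uniform distribution on $\Phi^*_{m,n}$ this is immediate: its maximal outcome probability is $1/|\Phi^*_{m,n}| = 1/\binom{m}{n}$, and for $m\in\Omega(n^\nu)$ with $\nu>2$ one has $\binom{m}{n}\ge (m/n)^n \ge (c\,n^{\nu-1})^n$, which grows supra-exponentially, so $1/\binom{m}{n}$ lies well below $\e^{-2n}$ for $n$ large. For $\mcD^*_U$ I would invoke Theorem~\ref{theorem:bosonsamplingdistributionisflat}, which guarantees that, except with probability supra-exponentially small in $n$ over the Haar-random choice of $U$, the distribution $\mcD^*_U$ is exponentially flat; in the regime $m\in\Omega(n^\nu)$, $\nu>2$, the exponent it provides yields $\epsilon$-flatness with $\epsilon\le\e^{-2n}$ for $n$ large enough. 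Fixing any $U$ in this overwhelmingly likely good event, both distributions are simultaneously $\epsilon$-flat.

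Next I would apply Theorem~\ref{theorem:symmetricalgorithmsandflatdistributions} to the single device feeding the algorithm either $\mcD^*_U$ or the uniform distribution on $\Phi^*_{m,n}$. For any symmetric probabilistic algorithm receiving at most $\landauO((1/\epsilon)^{1/4})=\landauO(\e^{n/2})$ samples, that theorem states that the algorithm's output distribution is, except with probability $\landauO(\sqrt{\epsilon})=\landauO(\e^{-n})$, independent of which $\epsilon$-flat input distribution is used. Consequently the two output distributions---one under the hypothesis $\mcD^*_U$, one under the uniform hypothesis---are each within total-variation distance $\landauO(\e^{-n})$ of a common reference, and therefore within $\landauO(\e^{-n})$ of each other. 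A deciding algorithm would have to accept with probability at least $2/3$ in one case and at most $1/3$ in the other, so its two output distributions would differ by at least $1/3$ in total variation. For $n$ large this contradicts the $\landauO(\e^{-n})$ bound, so no symmetric algorithm can decide the task from fewer than $\Omega(\e^{n/2})$ samples.

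The step requiring most care is the exponent matching: one must verify that the exponential flatness supplied by Theorem~\ref{theorem:bosonsamplingdistributionisflat} is genuinely strong enough (maximal probability $\le\e^{-2n}$) to push the allowed sample count $(1/\epsilon)^{1/4}$ all the way up to $\e^{n/2}$ rather than only to a smaller exponential, and to do so while keeping both the error term $\landauO(\k^2\sqrt{\epsilon})$ and the $U$-failure probability suitably small, so that a single union bound over the choice of $U$ leaves the conclusion intact. The only remaining subtlety is conceptual rather than computational: translating the phrase \emph{``the output is with high probability independent of the underlying flat distribution''} into \emph{``the two hypotheses induce output distributions too close to be separated by a $2/3$ versus $1/3$ decision rule,''} which is the standard total-variation obstruction to hypothesis testing. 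Since the genuine mathematical content resides in Theorems~\ref{theorem:symmetricalgorithmsandflatdistributions} and \ref{theorem:bosonsamplingdistributionisflat}, which may be assumed here, the proof is essentially this parameter bookkeeping together with the triangle inequality.
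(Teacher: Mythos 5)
There is a genuine gap: you invoke the wrong flatness theorem, and as a result your argument does not cover the parameter range claimed in the statement. Theorem~\ref{theorem:bosonsamplingdistributionisflat} only holds for $\nu>3$ (the paper is explicit that the crude permanent bound in Eq.~\eqref{eq:stupidpermamentbound} is what forces this), whereas Theorem~\ref{theorem:mainresultone} is asserted for all $\nu>2$. What you have sketched is, in effect, the paper's proof of Theorem~\ref{theorem:mainresulttwo} (the full distribution $\mcD_U$, $\nu>3$), not of Theorem~\ref{theorem:mainresultone}. The paper instead derives Theorem~\ref{theorem:mainresultone} from Theorem~\ref{theorem:bosonsamplingdistributionisevenflateronthecollisionfreesubspace}, the improved flatness bound on the collision-free sector, which is proved for all $\nu>1$ via the fourth-moment bound $\E_{X \sim \mu_{G(1/\sqrt{m})}}|\Perm(X)|^4 = 2^{2n}(n!)^2(n+1)m^{-2n}$ and Markov's inequality, rather than the $(n!)^2 (\max_{j,k}|x_{j,k}|)^{2n}$ bound. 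With the choice $\epsilon = n^{-n/2}$ that theorem gives a failure probability over $U$ of order $\exp(-\Omega((\nu-2)\,n\ln n))$, which is supra-exponentially small precisely when $\nu>2$ --- this is where the threshold $\nu>2$ in the statement comes from, and your route cannot reproduce it. Note also that $n^{-n/2}$-flatness gives a sample bound $(1/\epsilon)^{1/4} = n^{n/8} = \e^{(n\ln n)/8} \gg \e^{n/2}$, so the conclusion $\Omega(\e^{n/2})$ follows with ample room, whereas your $\epsilon = \e^{-2n}$ sits exactly on the boundary.

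A second, related defect: Theorem~\ref{theorem:bosonsamplingdistributionisflat} is a statement about $\mcD_U$ on all of $\Phi_{m,n}$, not about the post-selected distribution $\mcD^*_U$, yet you apply it to $\mcD^*_U$ as if directly. Post-selection renormalizes: $\Pr_{\mcD^*_U}[S] = \Pr_{\mcD_U}[S]/\Pr_{\mcD_U}[\Phi^*_{m,n}]$, so to convert flatness of $\mcD_U$ into flatness of $\mcD^*_U$ you must lower-bound the collision-free mass $\Pr_{\mcD_U}[\Phi^*_{m,n}]$, and you must do so with supra-exponentially high probability over $U \sim \mu_H$ to preserve the probability claim of the theorem. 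The available tool (Theorem~13.4 of Ref.~\cite{Aaronson}) only controls this mass in expectation, and a Markov-type argument converts that into a with-high-probability statement whose failure probability is merely polynomially small in $n$ --- not supra-exponentially small. In the paper's route this issue is harmless because the flatness level $n^{-n/2}$ leaves supra-exponential slack: even a subexponentially small normalization constant cannot lift the maximal probability of $\mcD^*_U$ above, say, $\e^{-2n}$. With your boundary choice $\epsilon=\e^{-2n}$ and your probability accounting, the renormalization step is both unaddressed and not recoverable at the claimed error rates. The final step of your argument --- translating Theorem~\ref{theorem:symmetricalgorithmsandflatdistributions} into a total-variation obstruction against a $2/3$ versus $1/3$ decision rule --- is correct and matches the paper's intent; the flaw is confined to which flatness result feeds it.
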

\begin{proof}
  The theorem is an immediate corollary of our Theorems~\ref{theorem:symmetricalgorithmsandflatdistributions} and \ref{theorem:bosonsamplingdistributionisevenflateronthecollisionfreesubspace}.
\end{proof}
Notice that the hardness results of Ref.~\cite{Aaronson} requires that $\nu > 5$, and $\nu>2$ is known to be necessary for the proof strategy used there to work, so our theorem fully covers the interesting parameter range.

Without any a priori knowledge about the distribution the labels of the elements of the sample space have no meaning to the certifier. 
Thus, in the black box setting any decision reached following a non-symmetric algorithm seems arbitrary and cannot qualify as a conclusion, but at the same time, as Theorem~\ref{theorem:mainresultone} shows, symmetric algorithms are essentially useless to distinguish the post selected Boson-Sampling distribution from the uniform one.

We now argue that the above theorem is relevant for the problem of certifying a real Boson-sampling device.
Importantly, in a realistic situation the certifier knows the specific unitary $U$ implemented by the supposed Boson-Sampling device.
In some particular cases, for example, when $U$ is such that it has some special structure, e.g., such that some outcomes are particularly likely to occur or some have probability zero \cite{tichy2010,tichy2012}, this knowledge could be used to construct a non-symmetric decision algorithm, thus opening up the possibility to drastically reduce the sample complexity.
However, this seems very implausible in the interesting instances, i.e., the ones that are covered by the hardness proof, precisely due to the fact that it is believed to be $\#P$ hard to approximate the probabilities of individual outcomes for $U \sim \mu_H$.

We can make a similar statement about the full Boson-Sampling distribution:
\begin{theorem}[Distinguishing the Boson-Sampling distribution from the uniform one] \label{theorem:mainresulttwo}
  If $U \sim \mu_H$ and $m \in \Omega(n^\nu)$ with $\nu>3$, then with probability supra-exponentially small in $n$ no symmetric probabilistic algorithm 
  can distinguish the Boson-Sampling distribution $\mcD_U$ from the uniform distribution over $\Phi_{m,n}$ from fewer than $\Omega(\e^{n/2})$ many samples. 
\end{theorem}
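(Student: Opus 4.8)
The plan is to obtain this theorem, exactly as its collision-free counterpart Theorem~\ref{theorem:mainresultone}, as an immediate corollary of the general statement on symmetric algorithms (Theorem~\ref{theorem:symmetricalgorithmsandflatdistributions}) combined with a flatness estimate for the Boson-Sampling distribution. Where Theorem~\ref{theorem:mainresultone} pairs the former with the sharper collision-free bound (Theorem~\ref{theorem:bosonsamplingdistributionisevenflateronthecollisionfreesubspace}), here I would instead invoke the flatness result for the full distribution on $\Phi_{m,n}$, namely Theorem~\ref{theorem:bosonsamplingdistributionisflat}.

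First I would verify that both distributions entering the discrimination task are $\epsilon$-flat for a common $\epsilon$ exponentially small in $n$. For the uniform distribution on $\Phi_{m,n}$ this is trivial: each outcome has probability $1/|\Phi_{m,n}|$, and since $|\Phi_{m,n}|=\binom{m+n-1}{n}$ grows faster than exponentially in $n$, it is $\epsilon$-flat with an $\epsilon$ far smaller than needed. For $\mcD_U$ with $U\sim\mu_H$, Theorem~\ref{theorem:bosonsamplingdistributionisflat} guarantees that, with probability supra-exponentially close to one over the choice of $U$, the distribution has min entropy large enough to be $\epsilon$-flat with $\epsilon=\landauO(\e^{-2n})$. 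The Boson-Sampling distribution is thus the binding, least-flat case, so I would take this $\epsilon$ as the common flatness parameter for both. It is precisely this estimate that carries the hypothesis $\nu>3$, as opposed to the weaker $\nu>2$ available on the collision-free subspace; I would track how the exponent $\nu$ in $m\in\Omega(n^\nu)$ propagates through the min-entropy bound to the stated threshold.

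Then I would apply Theorem~\ref{theorem:symmetricalgorithmsandflatdistributions} with the number $\k$ of sampling devices in the discrimination task, a constant (in the simplest reading $\k=1$). Because both candidate distributions are $\epsilon$-flat, the theorem ensures that the output distribution of any symmetric probabilistic algorithm fed at most $\landauO((1/\epsilon)^{1/4})$ samples is, with probability $1-\landauO(\k^2\sqrt\epsilon)=1-\landauO(\e^{-n})$, independent of which distribution the device samples from. With $\epsilon=\landauO(\e^{-2n})$ one gets $(1/\epsilon)^{1/4}=\Omega(\e^{n/2})$, matching the claimed sample count. An algorithm whose output law is, up to this exponentially small failure probability, the same whether the device produces $\mcD_U$ or the uniform distribution cannot have its acceptance probability differ by the constant gap required to decide the problem; hence no symmetric algorithm succeeds from fewer than $\Omega(\e^{n/2})$ samples. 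Intersecting the two supra-exponentially likely events over $U$ closes the argument.

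The genuine mathematical content lives entirely in the two invoked theorems, so I expect no real obstacle at this stage --- the work is bookkeeping. The two points that demand care are (i) confirming that the flatness level $\epsilon=\landauO(\e^{-2n})$ delivered by Theorem~\ref{theorem:bosonsamplingdistributionisflat} is at once strong enough that $(1/\epsilon)^{1/4}=\Omega(\e^{n/2})$ and already attainable for $\nu>3$, and (ii) checking cleanly that \emph{output independent of the underlying distribution} really translates into \emph{the two acceptance probabilities coincide up to $\landauO(\sqrt\epsilon)$}, which is what licenses the conclusion that the decision problem cannot be solved.
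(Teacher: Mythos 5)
Your proposal is correct and matches the paper's own proof, which indeed obtains Theorem~\ref{theorem:mainresulttwo} as an immediate corollary of Theorem~\ref{theorem:symmetricalgorithmsandflatdistributions} combined with Theorem~\ref{theorem:bosonsamplingdistributionisflat}, exactly as you describe. Your bookkeeping --- taking $\epsilon=\landauO(\e^{-2n})$ so that $(1/\epsilon)^{1/4}=\Omega(\e^{n/2})$, noting that the $\nu>3$ hypothesis enters only through the flatness estimate for the full distribution, and attributing the supra-exponentially small probability to the choice of $U\sim\mu_H$ --- is precisely what the paper's one-line proof leaves implicit.
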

\begin{proof}
  The theorem is an immediate corollary of Theorem~\ref{theorem:symmetricalgorithmsandflatdistributions} and \ref{theorem:bosonsamplingdistributionisflat}.
\end{proof}
As said earlier, Ref.~\cite{Aaronson} requires that $\nu > 5$, however, it is believed that $m$ growing faster than quadratically with $n$ is sufficient for hardness, which leaves open a parameter range not covered by Theorem~\ref{theorem:mainresulttwo}.
At the same time we have good reasons to believe that this is merely a technicality and that $\nu>2$ is already sufficient for the statements of Theorem~\ref{theorem:mainresulttwo} to be valid (see the discussion in Section~\ref{sec:thebosonsamplingdistributionisflat} and Theorem~\ref{theorem:bosonsamplingdistributionisevenflateronthecollisionfreesubspace}).

In the latter case of the full Boson-Sampling distribution $\mcD_U$ the restriction to symmetric algorithms is arguably less natural, mainly because it is known that bosons tend to \emph{bunch} or \emph{cluster} \cite{hong87,tichy2010,tichy2012}.
That is, output sequences $(s_1,\dots,s_m) \in \Phi_{m,n}$ with ``collisions'', i.e., ones in which at least one $s_j$ is larger than one are, on average over $U \sim \mu_H$, more likely than in the uniform distribution over $\Phi_{m,n}$ (although not dramatically more likely, see Theorem~13.4 in Ref.~\cite{Aaronson}).
This could potentially be used to distinguish the Boson-sampling distribution from the uniform one by a non-symmetric algorithm.
We argue that this does not qualify as a certification of a provably hard task.
This is because the proof of Ref.~\cite{Aaronson} only considers the distribution on the Boson-Sampling distribution on the  collision-free sector $\Phi^*_{m,n}$.
Hence, checking that the output distribution shows the correct bunching cannot help to corroborate that the output distribution is covered by the hardness proof of Ref.~\cite{Aaronson}.

This is related to another subtlety that is important to correctly understand the meaning of our results.
The hardness result of Ref.~\cite{Aaronson} covers all distributions that are at most a small distance away in 1-norm from the ideal Boson-Sampling distribution.
Hence, a device that certifies that a black box samples from a distribution that is covered by the hardness results of Ref.~\cite{Aaronson} does not necessarily need to accept with probability at least $2/3$ on the ideal Boson-Sampling distribution, it is in principle sufficient if it does so on some distribution inside this 1-norm ball.
The 1-norm ball includes distributions that are not exponentially flat and which can be distinguished from the uniform distribution from polynomially large number of samples using a symmetric algorithm \cite{Batu}.
Symmetric certification algorithms with polynomial sample complexity for these distributions thus cannot be excluded by our results.

There is a further subtlety:
Consider a device that with probability $1-\epsilon$ outputs a sample from an ideal Boson-Sampling device and with probability $\epsilon$ outputs a specific sample that encodes the solution to an NP-complete problem.
The output distribution of this device would be $\epsilon$ close to the Boson-Sampling distribution in 1-norm.
At the same time, it can be certified from $\landauO(1/\epsilon)$ many samples, using a simple but non-symmetric algorithm, that the device is implementing a hard sampling problem by simply identifying the special outcome and checking that it is indeed a solution to the NP-complete problem.
Even though this distribution is covered by the hardness results of Ref.~\cite{Aaronson}, one would hardly say that its (certifiable) hardness is a consequence of the hardness of Boson-Sampling.

One might also consider the following alternative certification scenario.
Assume one already has a certified Boson-Sampling device, then one could try to certify another device by comparing the samples they output.
Again, our technical results, Theorem~\ref{theorem:symmetricalgorithmsandflatdistributions} and \ref{theorem:bosonsamplingdistributionisflat}, imply that with high probability this cannot be done using a symmetric algorithm and less than exponentially many samples.

Finally, it is important to note that our findings do not contradict the results of Ref.~\cite{Aaronson}.

\section{Upper bounds on the sample complexity in the state discrimination setting}
\label{sec:upperboundsonthesamplecomplexityinthestatediscriminationsetting}
In the state discrimination setting the certifier has the promise that the given sampling device samples from one of two known distributions $\mcP$ or $\mcQ$.
In particular he has knowledge of the sample space and all the probabilities that each of the two candidate distributions assign to the elements of this space.
The certifier's aim is to minimize the probability of wrongfully answering $\mcQ$ if the true distribution is $\mcP$ (error of the first kind) and that of wrongfully answering $\mcP$ if the true distribution is $\mcQ$ (error of the second kind).
This minimization can be done in various different ways.
For example one can minimize the (weighted) sum of the two probabilities, or minimize one while the other is kept constant or suppressed exponentially in the number of samples with a predefined rate.
The asymptotic behavior of the number of samples needed in these situations has been extensively studied in both the classical \cite{Chernoff52,Blahut1974} and quantum setting \cite{Hellstroem76,Holevo78,Hiai91,Audenaert2007}, in which the probability distributions are replaced by quantum states.
See also the introduction of Ref.~\cite{Audenaert2012} for a short review and \cite{Cover91} for further references.

Only recently, in Ref.~\cite{Audenaert2012}, bounds on the error probabilities for finite sample sizes were derived.
They hold in both the classical and the quantum setting, but here we will only need the classical versions.
They imply that in the state discrimination setting the Boson-Sampling distribution can be distinguished from the uniform distribution from a polynomial number of samples:
\begin{theorem}[Lower bound on the sample complexity of distinguishing the Boson-Sampling distribution from the uniform one in the state discrimination setting] \label{theorem:samplecomplexityinthestatediscriminationsetting}
  Let $\epsilon >0$ and $m \leq c\,n^\nu$ for some $c\geq0,\nu\geq1$.
  Then for any $\gamma>0$ there exists a constant $C>0$, such that for and any instance of Boson-Sampling with $n$ bosons in $m$ modes, whose distribution is at least $\epsilon$ far from the uniform distribution in 1-norm, there is an algorithm that distinguishes the former from the latter from $C\,n^{2+\gamma}$ many samples.
\end{theorem}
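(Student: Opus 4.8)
The plan is to treat the task as a two-hypothesis state-discrimination problem with both candidate distributions fully known: $\mcP \coloneqq \mcD_U$ and $\mcQ$ the uniform distribution on $\Phi_{m,n}$. Because the certifier has access to all probabilities, the optimal decision rule after $N$ independent samples is the minimum-error (likelihood-ratio) test, and the tool I would invoke is the finite-sample error bound of Ref.~\cite{Audenaert2012}. In its classical form it guarantees that the minimum-error probability after $N$ i.i.d.\ samples is at most $\tfrac12\,Q^N$, where
\[
 Q \coloneqq \min_{0\le s\le 1}\ \sum_{S\in\Phi_{m,n}} \Pr_{\mcP}\!\left[S\right]^{s}\,\Pr_{\mcQ}\!\left[S\right]^{1-s}
\]
is the single-sample Chernoff quantity. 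Using this finite-sample statement, rather than the asymptotic Chernoff theorem, is essential: I need an explicit guarantee for a concrete finite $N$, not merely the exponential decay rate.

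The next step is to bound $Q$ by a quantity controlled by the promised $1$-norm distance. Evaluating the exponent at the (in general suboptimal, but for an upper bound entirely sufficient) value $s=\tfrac12$ gives
\[
 Q \le \sum_{S\in\Phi_{m,n}} \sqrt{\Pr_{\mcP}\!\left[S\right]\,\Pr_{\mcQ}\!\left[S\right]} = 1 - H^2(\mcP,\mcQ),
\]
with $H^2(\mcP,\mcQ)$ the squared Hellinger distance. The Hellinger--total-variation inequality $H^2 \ge \tfrac18\,\|\mcP-\mcQ\|_1^2$, combined with the hypothesis $\|\mcP-\mcQ\|_1 \ge \epsilon$, yields $Q \le 1-\epsilon^2/8 \le \e^{-\epsilon^2/8}$, so the minimum-error probability is at most $\tfrac12\,\e^{-N\epsilon^2/8}$. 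This drops below any fixed constant for $N = \landauO(1/\epsilon^2)$; amplifying the symmetric test then pushes both the error of the first and of the second kind below $1/3$. Notably, this count is governed solely by the per-sample rate and is independent of the (super-exponential) size of $\Phi_{m,n}$.

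It remains to cast the result in the stated form $C\,n^{2+\gamma}$. For a fixed distance $\epsilon$ the count $\landauO(1/\epsilon^2)$ is already constant in $n$, so the polynomial bound holds a fortiori; the explicit exponent $2+\gamma$ and the freedom in $\gamma>0$ arise only when one tracks how the relevant distance and rate may themselves depend on $n$ and demands a confidence that is robust across the enormous outcome set, at which point the sample-space estimate \eqref{eq:Phibound} --- which gives $\log|\Phi_{m,n}| = \landauO(n\log n)$ --- supplies the logarithmic slack that $\gamma$ absorbs. The main obstacle is accordingly not the hypothesis-testing machinery, which is standard, but the clean, instance-uniform conversion of the $1$-norm promise into the overlap bound on $Q$; the Hellinger step accomplishes exactly this, and optimising over $s$ or passing instead through the relative entropy would only improve constants without changing the order.
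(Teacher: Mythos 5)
Your proof is correct, and it reaches the theorem by a genuinely different route than the paper. The paper works in the asymmetric (Stein-type) setting: it invokes Theorem~3.3 of Ref.~\cite{Audenaert2012} to bound the error of the second kind $\beta_{l,\alpha}$ at fixed error of the first kind, lower-bounds the relative entropy via Pinsker, $\Sr{\mcP}{\mcQ} \geq \epsilon^2/2$, and controls the finite-sample correction term through $\Sr[3/2]{\mcP}{\mcQ} \leq \Sr[2]{\mcP}{\mcQ} \leq \ln|\Phi_{m,n}| \in \landauO(n \ln n)$ (using that $\mcQ$ is uniform and Eq.~\eqref{eq:Phibound}); demanding that the right-hand side of the resulting bound be negative is precisely what produces the exponent $2+\gamma$. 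You instead run the symmetric minimum-error test with the Chernoff--Bhattacharyya bound evaluated at $s=1/2$, combined with the Hellinger--total-variation inequality, obtaining error at most $\tfrac12\,\e^{-N\epsilon^2/8}$ and hence $N \in \landauO(\epsilon^{-2})$, \emph{independent of $n$ and of $|\Phi_{m,n}|$}. Since $\epsilon$ is a fixed constant in the theorem statement, this is strictly stronger and yields the $C\,n^{2+\gamma}$ bound a fortiori; it also reveals that the $n$-dependence in the paper's bound is an artifact of the $\Sr[3/2]{\mcP}{\mcQ}/\sqrt{l}$ correction term rather than anything intrinsic to the problem. Two small corrections: the bound $\tfrac12\,Q^N$ is the classical Chernoff bound (it follows elementarily from $\min(a,b) \leq a^s\,b^{1-s}$) and is not really the content of Ref.~\cite{Audenaert2012}, which the paper needs only because it constrains the two error kinds separately; and no amplification step is required, since with equal priors $\alpha + \beta \leq Q^N$, so both kinds of error fall below $1/3$ simultaneously once $Q^N \leq 1/3$. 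Finally, your closing speculation that the exponent $2+\gamma$ reflects uniformity over the huge outcome set correctly describes the paper's mechanism (through $\ln|\Phi_{m,n}|$) but, as your own argument shows, that dependence can be dispensed with entirely.
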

Note that the above theorem covers all instances of Boson-Sampling that can potentially be hard to sample from approximately in 1-norm.
\begin{proof}
  For $0 \leq t \neq 1$ and two probability distributions over a finite sample space $\Phi$ we define the $t$-\emph{R\'{e}nyi relative entropy} of $\mcP$ given $\mcQ$
  \begin{equation}
    \Sr[t] {\mcP}{\mcQ} \coloneqq 
    \begin{dcases}
    \frac{1}{t-1} \ln \quad\sum_{\mathclap{S \in \supp(\mcP) \cap \supp(\mcQ)}}\quad \Pr_{\mcP}[S]^t \Pr_{\mcQ}[S]^{1-t} & \text{if } \supp(\mcP) \subseteq \supp(\mcQ) \\
    \infty & \text{otherwise} 
  \end{dcases},
\end{equation}
where $\supp(\mcP) = \{S \in \Phi : \Pr_{\mcP}[S] \neq 0 \}$ and we set $\ln 0 = -\infty$.
As $\mcP$ is normalized the limit $t \to 1$ exists \cite{Audenaert2012} and $\Sr{\mcP}{\mcQ} \coloneqq \lim_{t \to 1} \Sr[t]{\mcP}{\mcQ}$ is called \emph{relative entropy} of $\mcP$ given $\mcQ$.

Let $\beta_{l,\alpha}$ be the optimal achievable error of the second kind in the state discrimination setting after receiving $l$ samples when the error of the first kind is upper bounded by $\alpha$.
Theorem~3.3 in Ref.~\cite{Audenaert2012} implies that for all $l,\alpha>0$ 
\begin{equation}
  \frac{1}{l} \ln \beta_{l,\alpha} \leq - \Sr {\mcP} {\mcQ} + \frac{1}{\sqrt{l}} 4 \sqrt{2} \ln(\alpha^{-1}) \ln \eta - \frac{2 \ln 2}{l} ,
\end{equation}
where
\begin{equation}
  \eta = 1 + \e^{\Sr[3/2]{\mcP}{\mcQ}/2} + \e^{-\Sr[1/2]{\mcP}{\mcQ}/2} .
\end{equation}
Since
\begin{align}
  \eta &\leq 2 + \e^{\Sr[3/2]{\mcP}{\mcQ}/2} \leq \e^{\Sr[3/2]{\mcP}{\mcQ}/2+\ln 3}
\end{align}
this implies
\begin{align} \label{eq:simplefinitesampleerrorbound}
  \frac{1}{l} \ln \beta_{l,\alpha} \leq - \Sr {\mcP} {\mcQ} + \frac{1}{\sqrt{l}} 4 \sqrt{2} \ln(\alpha^{-1})\, (\Sr[3/2]{\mcP}{\mcQ}/2+\ln 3) .
\end{align}
Theorem~1.15 in Ref.~\cite{Ohya93} implies the first of the following inequalities
\begin{equation}
  \Sr{\mcP}{\mcQ} \geq \frac{1}{2} \|\mcP - \mcQ \|_1^2 \geq \frac{1}{2} \epsilon^2 ,
\end{equation}
the second is implied by the assumptions of the Theorem.
Moreover, if $\mcQ$ is the uniform distribution over $\Phi$, then
\begin{align}
  \Sr[3/2]{\mcP}{\mcQ} &\leq \Sr[2]{\mcP}{\mcQ} = \ln(|\Phi| \sum_{S \in \Phi} \Pr_{\mcP}[S]^2) \leq \ln |\Phi| .
\end{align}
Hence, for $\Phi = \Phi_{m,n}$ and if $m \leq c\,n^\nu$ we have by Eq.~\eqref{eq:Phibound}
\begin{equation}
  \Sr[3/2]{\mcP}{\mcQ} \leq n \ln(2\,(c+1)\,\e) + n\,(\nu-1) \ln(n) .
\end{equation}
This implies that for $\alpha = 1/3$, $\mcP$ the Boson-Sampling distribution with $m \leq c\,n^{\nu}$, and $\mcQ$ the uniform distribution over $\Phi_{m,n}$, a number of samples $l$ scaling like $l \in \Omega(n^{2+\gamma})$, for any $\gamma>0$, is sufficient to make the right hand side of Eq.~\eqref{eq:simplefinitesampleerrorbound} negative, and thereby $\beta_{l,1/3} \leq 1/3$ for sufficiently large $n$.
\end{proof}

\section{Sample complexity in the black box setting}
\label{sec:samplecomplexityintheblackboxsetting}
In this section we give lower bounds on the sample complexity of decision problems in the black box setting.
Apart from the scenario relevant for the certification of Boson-Sampling, in which the certifier is given a black box and is asked to distinguish the two cases where it samples from the Boson-Sampling distribution or the uniform one, we will also cover scenarios where the certifier is given two or more black boxes and is asked to decide whether they sample from the same or form different probability distributions (see also Ref.~\cite{Batu}).

The lower bounds are ultimately a consequence of a variant of the birthday paradox for $\epsilon$-flat probability distributions.
We call a probability distribution $\mcP$ over a finite sample space \emph{$\epsilon$-flat} if $\|\mcP\|_\infty \leq \epsilon$, i.e., all probabilities are smaller than $\epsilon$, or equivalently if $\mcP$ has min entropy $H_\infty \geq - \log_2 \epsilon$.

\begin{lemma}[Non-uniform non-identically distributed birthday paradox] \label{lemma:nonuniformnoniidbirthdayparadox}
  The probability $\bar p(l,|\Phi|,\epsilon)$ that $l$ samples drawn independently from not necessarily identical $\epsilon$-flat distributions over a finite sample space $\Phi$ are all different fulfills
  \begin{equation} \label{eq:nonuniformbirthdayparadoxlowerbound}
    \forall l \leq 1 + 1/(2\epsilon):\quad \bar p(l,|\Phi|,\epsilon) \geq 2^{-l^2 \epsilon} .
  \end{equation}
\end{lemma}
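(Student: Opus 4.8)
The plan is to expose all the pairwise differences among the samples by sequential conditioning (the chain rule), bound each conditional probability by a union bound over the previously seen values, and then control the resulting product with an elementary logarithmic inequality whose domain of validity is exactly the stated range of $l$.

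First I would write the samples as independent random variables $X_1, \dots, X_l$ with $X_k$ distributed according to some $\epsilon$-flat distribution $\mcP_k$ over $\Phi$. Denoting by $D_k$ the event that $X_1, \dots, X_k$ are pairwise distinct (and letting $D_0$ be the certain event), the chain rule gives
\begin{equation}
  \bar p(l,|\Phi|,\epsilon) = \Pr[D_l] = \prod_{k=1}^{l} \Pr[D_k \mid D_{k-1}].
\end{equation}
The key step is to bound each factor. Since $X_k$ is independent of $X_1, \dots, X_{k-1}$, conditioning on any realization in which these take $k-1$ distinct values $x_1, \dots, x_{k-1}$ leaves the law of $X_k$ unchanged, so a union bound together with the $\epsilon$-flatness of $\mcP_k$ yields
\begin{equation}
  \Pr[X_k \notin \{x_1, \dots, x_{k-1}\}] = 1 - \sum_{j=1}^{k-1} \Pr_{\mcP_k}[x_j] \geq 1 - (k-1)\,\epsilon.
\end{equation}
As this holds for every such realization, averaging over the conditioning gives $\Pr[D_k \mid D_{k-1}] \geq 1-(k-1)\epsilon$, and hence $\bar p \geq \prod_{k=0}^{l-1}(1-k\epsilon)$.

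It then remains to show $\prod_{k=0}^{l-1}(1-k\epsilon) \geq 2^{-l^2\epsilon}$. Here I would pass to logarithms and use the elementary bound $-\log_2(1-x) \leq 2x$ for $x \in [0,1/2]$, which follows from the convexity of $-\log_2(1-x)$ together with the fact that the chord joining $(0,0)$ and $(1/2,1)$ is the line $2x$. The hypothesis $l \leq 1 + 1/(2\epsilon)$ is precisely what guarantees that every argument $k\epsilon$ with $0 \leq k \leq l-1$ lies in $[0,1/2]$, so the bound applies termwise and
\begin{equation}
  -\log_2 \prod_{k=0}^{l-1}(1-k\epsilon) = \sum_{k=0}^{l-1} \bigl[-\log_2(1-k\epsilon)\bigr] \leq \sum_{k=0}^{l-1} 2k\epsilon = \epsilon\,l(l-1) \leq \epsilon\,l^2,
\end{equation}
which rearranges to the claimed inequality.

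The main thing to be careful about --- rather than a genuine obstacle --- is the conditioning with non-identical distributions: one must check that the bound $1-(k-1)\epsilon$ is uniform both over which distribution $\mcP_k$ generates the $k$-th sample and over the values taken by the earlier samples, so that independence legitimately turns the conditional probabilities into the clean product above. Everything else is a routine estimate, with the restriction $l \leq 1 + 1/(2\epsilon)$ entering naturally as the domain on which the logarithmic inequality is valid.
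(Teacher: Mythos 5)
Your proof is correct and takes essentially the same route as the paper's: the same product bound $\bar p(l,|\Phi|,\epsilon) \geq \prod_{k=0}^{l-1}(1-k\epsilon)$, followed by the same elementary inequality (your $-\log_2(1-x)\leq 2x$ on $[0,1/2]$ is just the logarithmic form of the paper's $1-j\epsilon \geq 2^{-2\,j\,\epsilon}$, valid under the identical condition $(l-1)\epsilon \leq 1/2$) and the same summation giving $2^{-l(l-1)\epsilon} \geq 2^{-l^2\epsilon}$. The only difference is that you spell out the chain-rule and union-bound justification of the product bound and the convexity argument for the key inequality, both of which the paper leaves implicit.
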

\begin{proof}
  The probability $\bar p(l,|\Phi|,\epsilon)$ that all samples are different is bounded by
  \begin{align}
    \bar p(l,|\Phi|,\epsilon) &\geq \prod_{j=1}^{l-1} (1 - j \epsilon ) .
  \end{align}
  If $1/2 \leq 1 - \epsilon\,(l-1) \leq 1$, we have
  \begin{equation}
    \forall j\in[l-1]:\quad  1-j\,\epsilon \geq 2^{-2\,j\,\epsilon} .
  \end{equation}
  This implies that for sufficiently large $l$
  \begin{align}
    \bar p(l,|\Phi|,\epsilon) &\geq \prod_{j=1}^{l-1} (1 - j\,\epsilon ) \geq \prod_{j=1}^{l-1} 2^{-2\,j\,\epsilon} \\
    &= 2^{-2 \sum_{j=1}^{l-1} j\,\epsilon} = 2^{-l\,(l-1) \epsilon} \geq 2^{-l^2 \epsilon} .
  \end{align}
\end{proof}
Now, we consider the situation of $\k$ black boxes that sample each from one of $\k$ probability distributions $(\mcP^{(j)})_{j=1}^\k$ over the same finite sample space $\Phi$.
For $j\in [\k]$ and $l \in \Z^+$, let $\mcS^{(j)} \coloneqq (S^{(j)}_1,\dots,S^{(j)}_l) \in \Phi^l$ be sequences of samples of length $l$ from each of the distributions respectively.
We will keep the discussion in this chapter general but will later mostly be interested in the case $\k = 1$.

The certifier works under the assumption that the sampling device outputs independent identically distributed samples.
Hence, the order of the samples in each sequence should not influence the certifier's decision.
Moreover, in the black box setting the certifier is assumed to have no a priori knowledge about the distribution.
If in addition the decision problem of the certifier is invariant under a relabeling of the sample space, its decision should be independent of which element of the sample space is assigned which label.
If this is not the case it cannot qualify as a conclusion reached based on the samples.
Therefore, for tasks such as deciding whether a given black box is sampling from the uniform distribution or not, or deciding whether a number of black boxes sample from the same or from different distributions the certifier should follow a \emph{symmetric probabilistic algorithm}.
\begin{definition}[Symmetric probabilistic algorithm] \label{definition:symmetricprobabilisticalgorithm}
  An algorithm that takes as input for each $j \in [\k]$ a sequence of samples $\mcS^{(j)} \subset \Phi^l$ and probabilistically outputs either ``accept'' or ``reject'' is called a \emph{symmetric probabilistic algorithm} if its output distribution is invariant under permuting the samples in each sequences
  \begin{align}
    \forall j &\in [\k]:&  (S^{(j)}_1,\dots,S^{(j)}_l) &\mapsto (S^{(j)}_{\tau_j(1)},\dots,S^{(j)}_{\tau_j(l)}) ,\quad \tau_j \in \Sym([l]) ,
  \intertext{and relabeling of the sample space $\Phi$, i.e., the action of $\Sym(\Phi)$ on all $\mcS^{(j)}$ simultaneously}
  \forall j &\in [\k]:&  (S^{(j)}_1,\dots,S^{(j)}_l) &\mapsto (\kappa(S^{(j)}_1),\dots,\kappa(S^{(j)}_l)) ,\quad \kappa \in \Sym(\Phi) .
  \end{align}
\end{definition}

Following Ref.~\cite{Batu} we define the \emph{fingerprint tensor} $C((\mcS^{(j)})_{j=1}^\k) \in \N^{(l+1)\times\dots\times(l+1)}$ of the sequences of samples, such that for all $k_1,\dots,k_\k \in \{0,1,\hdots,l\}$, $C_{k_1,\dots,k_\k}$ is the number of elements in $\Phi$ that for all $j \in [\k]$ appear exactly $k_j$ times in the $j$-th sequence of samples $\mcS^{(j)}$.
Obviously $\sum_{k_1,\dots,k_\k=0}^l C_{k_1,\dots,k_\k} = |\Phi|$.
For $\k=1$ this construction results in the fingerprint vector
\begin{equation}
  C_{k_1} \coloneqq |\{ S' \in \Phi : |\{S \in \mcS^{(1)} : S = S' \}| = k_1 \}| 
\end{equation}
and for $\k=2$ the result is the fingerprint matrix
\begin{equation}
  \begin{split}
    C_{k_1,k_2} \coloneqq |\{ S' \in \Phi : &|\{S^{(1)} \in \mcS^{(1)} : S^{(1)} = S' \}| = k_1 \\
    \text{and } &|\{S^{(2)} \in \mcS^{(2)} : S^{(2)} = S' \}| = k_2 \}| .
  \end{split}
\end{equation}
For example, if $\Phi=[6]$,
$S^{(1)}= (1,5,1,1,2)$ and $S^{(2)} = (2,6,1,4,6)$, then the fingerprint matrix is given by
\begin{equation}
	C= \left(
	\begin{array}{cccccc}
	1 & 1 & 1 & 0 & 0 & 0\\
	1 & 1 & 0 & 0 & 0 & 0\\
	0 & 0 & 0 & 0 & 0 & 0\\
	0 & 1 & 0 & 0 & 0 & 0\\
	0 & 0 & 0 & 0 & 0 & 0\\
	0 & 0 & 0 & 0 & 0 & 0\\
	\end{array}
	\right).
\end{equation}

The fingerprint tensor encodes all the information contained in the samples that is invariant under permuting the labels of the sample space or reordering the samples in each sequence. That is, the sequences of samples can be reconstructed exactly from the fingerprint up to a permutation of the samples in each sequence and a global relabeling of the sample space \cite{Batu}.
This proves:
\begin{lemma}[Symmetric algorithms and the fingerprint (see also Ref.~\cite{Batu})] \label{lemma:symmetricalgorithmsandthefingerprint}
  For every symmetric probabilistic algorithm $\mcA$ there is exists an algorithm $\mcA'$ which has the same output distribution as $\mcA$, but takes as input the fingerprint of the sequences of samples.
\end{lemma}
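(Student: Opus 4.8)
The plan is to recognize the lemma as an instance of a general principle: a function that is invariant under a group action factors through any complete invariant of the orbits. Here the relevant group is $G \coloneqq \Sym([l])^{\k} \times \Sym(\Phi)$, acting on the space of input tuples $(\mcS^{(1)},\dots,\mcS^{(\k)}) \in (\Phi^l)^{\k}$: the factor $\Sym([l])^{\k}$ reorders the samples within each of the $\k$ sequences independently, while $\Sym(\Phi)$ relabels the sample space and acts on all sequences simultaneously. These two actions commute, so $G$ acts with well-defined orbits, and the condition in Definition~\ref{definition:symmetricprobabilisticalgorithm} says precisely that the output distribution of $\mcA$ is constant along each $G$-orbit.

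Next I would prove that the fingerprint tensor $C$ is a \emph{complete} orbit invariant, i.e.\ two input tuples have the same fingerprint if and only if they lie in the same $G$-orbit. The ``only if'' direction is the substantive one and is exactly the reconstruction claim stated above: given $C$, one reconstructs a representative tuple by, for each index vector $(k_1,\dots,k_\k)$, choosing $C_{k_1,\dots,k_\k}$ elements of $\Phi$ and declaring that each of them occurs $k_j$ times in the $j$-th sequence (this is consistent because a genuine fingerprint satisfies $\sum_{k_1,\dots,k_\k=0}^{l} C_{k_1,\dots,k_\k}=|\Phi|$, so exactly $|\Phi|$ elements are assigned, and the marginal multiplicities in sequence $j$ sum to $l$). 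Any two reconstructions from the same $C$ differ only by the choice of \emph{which} elements of $\Phi$ carry which multiplicity pattern and by the order in which repeated copies are listed, i.e.\ precisely by an element of $\Sym(\Phi)$ and of $\Sym([l])^{\k}$. The ``if'' direction is immediate, since multiplicities are manifestly unchanged by reordering within a sequence or by a global relabeling.

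With completeness in hand I would define $\mcA'$ as follows: on input a fingerprint $C$, compute a fixed canonical representative tuple $x_C$ with fingerprint $C$ (for instance via the reconstruction above together with a fixed rule for selecting the elements of $\Phi$), run $\mcA$ on $x_C$, and return its output. For any actual input $x$ with fingerprint $C$, completeness gives some $g \in G$ with $g\cdot x = x_C$, and $G$-invariance of the output distribution of $\mcA$ yields that $\mcA(x)$ and $\mcA(x_C)=\mcA'(C)$ have identical output distributions. Since every input has a well-defined fingerprint, this shows that $\mcA'$ reproduces the output distribution of $\mcA$ on every input, as claimed.

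I expect the only real work to be the ``only if'' (separation) half of the completeness statement, namely verifying that equal fingerprints force the tuples into a common orbit; the remaining steps are bookkeeping. The subtlety worth checking there is that the diagonal action of $\Sym(\Phi)$ --- the \emph{same} relabeling across all $\k$ sequences --- suffices to match up representatives, which is exactly why the fingerprint must record the \emph{joint} multiplicity pattern $(k_1,\dots,k_\k)$ of each sample-space element rather than the $\k$ marginal multiplicities separately.
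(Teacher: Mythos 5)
Your proposal is correct and takes essentially the same route as the paper: the paper's (very terse) proof consists exactly of the observation that the fingerprint is a complete invariant of the orbits of $\Sym([l])^{\k}\times\Sym(\Phi)$ --- the sequences can be reconstructed from it up to reordering within each sequence and a global relabeling --- so a symmetric algorithm factors through it. Your write-up merely fills in the reconstruction and canonical-representative details that the paper delegates to Ref.~\cite{Batu}, including the correct emphasis that the \emph{joint} multiplicity pattern is what makes the single diagonal relabeling suffice.
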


Denote by $\mcD_C((\mcP^{(j)})_{j=1}^\k, l)$ the probability distribution on fingerprint tensors induced by drawing $l$ samples from each $\mcP^{(j)}$, and then constructing the corresponding fingerprint tensor, and when we write $C \sim \mcD_C((\mcP^{(j)})_{j=1}^\k, l)$ we mean $C$ drawn from $\mcD_C((\mcP^{(j)})_{j=1}^\k, l)$.
For each $|\Phi|$, $\k$, and $l$ there is a unique \emph{trivial fingerprint} tensor $\tilde C$ that characterizes the situation where no sample appears more than once.
For $\k=1$ this is the vector
\begin{equation}
  \tilde C = (|\Phi|-l,l,0,\hdots,0) \in \N^{l+1} ,
\end{equation}
and for $\k=2$ it is the matrix
\begin{equation} \label{eq:standardfingerprintmatrix}
  \tilde C \coloneqq
  \begin{pmatrix}
    |\Phi| - 2\,l & l & 0 & \cdots &0\\
    l & 0 & 0 & \cdots &0\\
    0 & 0 & 0 &\cdots &0 \\
    \vdots& \vdots &  \vdots& \ddots& \vdots\\ 
      0 & 0 & 0 & \dots &  0 \\
  \end{pmatrix} \in \N^{(l+1)\times(l+1)} .
\end{equation}

Due to the birthday paradox, fingerprint tensors constructed from few samples are trivial with high probability.
\begin{lemma}[Fingerprint tensors from few samples] \label{lemma:kfingerprintmatricesfromfewsamples}
  Let $\k\in \Z^+$ and $(P^{(j)})_{j=1}^\k$ be $\epsilon$-flat probability distributions over a finite sample space $\Phi$.
  If $l \in \landauO((1/\epsilon)^{1/4})$ many samples are drawn from each $\mcP^{(j)}$ then
  \begin{equation} \label{lemmaFingerprint}
    \Pr_{C \sim \mcD_C((\mcP^{(j)})_{j=1}^\k,l)} [C \neq \tilde C] \in \landauO(\k^2 \sqrt{\epsilon}) .
  \end{equation}
\end{lemma}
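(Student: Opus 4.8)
The plan is to reduce the statement to the non-uniform, non-identically distributed birthday paradox of Lemma~\ref{lemma:nonuniformnoniidbirthdayparadox}. The crucial observation is that the event $C = \tilde C$ is exactly the event that all drawn samples are pairwise distinct. Indeed, inspecting the trivial fingerprint tensor $\tilde C$ (as written out above for $\k=1$ and $\k=2$), its only non-zero entries sit at the all-zeros multi-index and at the $\k$ unit multi-indices $(0,\dots,0,1,0,\dots,0)$. Hence $C = \tilde C$ holds precisely when every element of $\Phi$ occurs at most once in total across all $\k$ sequences $\mcS^{(1)},\dots,\mcS^{(\k)}$, so that $\Pr_C[C \neq \tilde C]$ equals the probability that at least one value is repeated among the pooled collection of $\k\,l$ samples.

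First I would note that these pooled $\k\,l$ samples are just $\k\,l$ independent draws from $\epsilon$-flat distributions (namely $l$ draws from each $\mcP^{(j)}$), so the hypotheses of Lemma~\ref{lemma:nonuniformnoniidbirthdayparadox} apply verbatim with its sample count replaced by $\k\,l$. This gives, for $\k\,l \leq 1 + 1/(2\epsilon)$,
\begin{equation}
  \Pr_{C \sim \mcD_C((\mcP^{(j)})_{j=1}^\k,l)}[C = \tilde C] = \bar p(\k\,l,|\Phi|,\epsilon) \geq 2^{-\k^2 l^2 \epsilon} ,
\end{equation}
and therefore $\Pr_C[C \neq \tilde C] \leq 1 - 2^{-\k^2 l^2 \epsilon}$. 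It then remains to carry out elementary estimates: using $1 - 2^{-x} \leq (\ln 2)\,x$ for $x \geq 0$ yields $\Pr_C[C \neq \tilde C] \leq (\ln 2)\,\k^2 l^2 \epsilon$, and the assumption $l \in \landauO((1/\epsilon)^{1/4})$ gives $l^2 \in \landauO(1/\sqrt{\epsilon})$, hence $l^2 \epsilon \in \landauO(\sqrt{\epsilon})$ and $\Pr_C[C \neq \tilde C] \in \landauO(\k^2 \sqrt{\epsilon})$. I would also verify that the validity condition $\k\,l \leq 1 + 1/(2\epsilon)$ is met in the relevant regime: whenever the asserted bound $\landauO(\k^2\sqrt{\epsilon})$ is non-trivial one has $\k \in \landauO((1/\epsilon)^{1/4})$, so $\k\,l \in \landauO(1/\sqrt{\epsilon})$, which is dominated by $1/(2\epsilon)$ for small $\epsilon$.

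The only genuinely non-routine step is the first reduction: recognizing that the trivial fingerprint encodes exactly the all-distinct event, and that the samples from the different devices may be pooled into a single birthday-paradox instance, which is legitimate precisely because every $\mcP^{(j)}$ is $\epsilon$-flat. Everything afterwards is a direct application of Lemma~\ref{lemma:nonuniformnoniidbirthdayparadox} together with the convexity inequality $1 - 2^{-x} \leq (\ln 2)\,x$, so I do not expect any serious obstacle beyond carefully justifying the pooling argument.
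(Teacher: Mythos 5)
Your proof is correct and takes essentially the same route as the paper: both arguments identify $C = \tilde C$ with the all-distinct event, pool the $\k\,l$ samples into a single application of Lemma~\ref{lemma:nonuniformnoniidbirthdayparadox}, and then bound $1 - 2^{-x}$ linearly in $x$. If anything, your explicit check of the validity condition $\k\,l \leq 1 + 1/(2\epsilon)$ (via the observation that the claimed bound is trivial unless $\k \in \landauO((1/\epsilon)^{1/4})$) is more careful than the paper's ``for sufficiently small $\epsilon$''.
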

\begin{proof}
  Let $a>0$ and $l \leq a\,(1/\epsilon)^{1/4}$ and denote for each $j \in [\k]$ by $\mcS^{(j)}$ the sequence of $l$ samples drawn from $\mcP^{(j)}$.
  Since all the $\mcP^{(j)}$ are $\epsilon$-flat probability distributions over $\Phi$ we have
  \begin{equation} \label{eq:boundoncolisionporbability}
    \Pr_{C \sim \mcD_C((\mcP)_{j=1}^\k, l)} \left[\exists S' \in \Phi : \sum_{j=1}^\k |\{ S \in \mcS^{(j)} : S = S'\}| > 1  \right] = \bar p(\k\,l,|\Phi|,\epsilon) ,
  \end{equation}
  with $\bar p$ as in Lemma~\ref{lemma:nonuniformnoniidbirthdayparadox}.
  For sufficiently small $\epsilon$ Lemma~\ref{lemma:nonuniformnoniidbirthdayparadox} yields
  \begin{equation}
    \bar p(\k\,l,|\Phi|,\epsilon) \leq 2^{-(\k\,a)^2 \sqrt{\epsilon}} 
  \end{equation}
  and
  \begin{equation}
    \Pr_{C \sim \mcD_C((\mcP)_{j=1}^\k, l)} [C \neq \tilde C] \leq 1 - 2^{-(\k\,a)^2 \sqrt{\epsilon}} \leq (\k\,a)^2 \sqrt{\epsilon}.
  \end{equation}
\end{proof}
Similar results can be obtained for all scalings $l \in \landauO((1/\epsilon)^\alpha)$ with $\alpha < 1/2$, but $\alpha = 1/4$ is good enough for our purposes and yields the particularly simple result stated above.

\begin{theorem}[Symmetric algorithms and $\epsilon$-flat distributions] \label{theorem:symmetricalgorithmsandflatdistributions}
  For every symmetric probabilistic algorithm there exists a trivial output distribution such that the output distribution of the algorithm after receiving at most $\landauO((1/\epsilon)^{1/4})$ many samples from each of $\k$ black boxes sampling from $\epsilon$-flat distributions is with probability $1 - \landauO(\k^2 \sqrt{\epsilon})$ equal to the trivial output distribution and hence, in particular, does not depend on which $\epsilon$-flat distributions were used to generate the samples.
\end{theorem}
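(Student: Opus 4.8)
The plan is to combine the two preceding lemmas: Lemma~\ref{lemma:symmetricalgorithmsandthefingerprint}, which allows me to assume without loss of generality that the algorithm sees the samples only through their fingerprint, and Lemma~\ref{lemma:kfingerprintmatricesfromfewsamples}, which guarantees that this fingerprint is overwhelmingly likely to be the trivial one $\tilde C$.

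First I would invoke Lemma~\ref{lemma:symmetricalgorithmsandthefingerprint} to replace the given symmetric probabilistic algorithm by an algorithm $\mcA'$ with the same output distribution that depends on the drawn samples only through the fingerprint tensor $C \sim \mcD_C((\mcP^{(j)})_{j=1}^\k, l)$. This reduces the claim to a statement about the random variable $\mcA'(C)$. I would then \emph{define} the trivial output distribution to be the output distribution of $\mcA'$ on the fixed input $\tilde C$. The crucial point is that $\tilde C$ depends only on $|\Phi|$, $\k$, and $l$, and not on the distributions $\mcP^{(j)}$, so this candidate distribution is one and the same for every choice of $\epsilon$-flat distributions.

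Next I would condition on the event $\{C = \tilde C\}$. By Lemma~\ref{lemma:kfingerprintmatricesfromfewsamples}, for $l \in \landauO((1/\epsilon)^{1/4})$ this event occurs with probability $1 - \landauO(\k^2\sqrt{\epsilon})$, and on it $\mcA'$ receives exactly the input $\tilde C$ and hence outputs according to the trivial output distribution. Decomposing the total output distribution for each outcome $o \in \{\text{accept},\text{reject}\}$ as
\begin{equation*}
  \Pr[\mcA'(C) = o] = \Pr[C = \tilde C]\,\Pr[\mcA'(\tilde C) = o] + \Pr[C \neq \tilde C]\,\Pr[\mcA'(C) = o \mid C \neq \tilde C]
\end{equation*}
then exhibits the output distribution as a convex combination in which the trivial component carries weight $1 - \landauO(\k^2\sqrt{\epsilon})$; this is precisely the assertion that with probability $1 - \landauO(\k^2\sqrt{\epsilon})$ the output is governed by the trivial output distribution, so that only the residual $\landauO(\k^2\sqrt{\epsilon})$ fraction can carry any dependence on the specific $\epsilon$-flat distributions.

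Because both lemmas are already established, I do not expect a genuinely hard step; the reasoning is essentially a corollary-level assembly. The only points demanding care are making explicit that $\tilde C$ --- and therefore the trivial output distribution --- is manifestly independent of the $\mcP^{(j)}$, and stating precisely that ``equal to the trivial output distribution with probability $1 - \landauO(\k^2\sqrt{\epsilon})$'' means exactly the convex-combination decomposition above.
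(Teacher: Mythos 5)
Your proposal is correct and follows essentially the same route as the paper's own proof: reduce to the fingerprint via Lemma~\ref{lemma:symmetricalgorithmsandthefingerprint}, define the trivial output distribution as the algorithm's output on the trivial fingerprint $\tilde C$, and apply Lemma~\ref{lemma:kfingerprintmatricesfromfewsamples} to bound the probability of a non-trivial fingerprint. Your explicit convex-combination decomposition and the remark that $\tilde C$ depends only on $|\Phi|$, $\k$, and $l$ merely spell out what the paper leaves implicit.
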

\begin{proof}
  By Lemma~\ref{lemma:symmetricalgorithmsandthefingerprint} any symmetric probabilistic algorithm is equivalent to an algorithm that only receives the fingerprint of the samples as input.
  If all input distributions are $\epsilon$-flat, then by Lemma~\ref{lemma:kfingerprintmatricesfromfewsamples}, if at most $\landauO((1/\epsilon)^{1/4})$ samples are drawn from each distribution, the probability that their fingerprint is non-trivial is of order $\landauO(\k^2 \sqrt{\epsilon})$.
  The result follows and the trivial output distribution is the output distribution corresponding to samples  with the trivial fingerprint.
\end{proof}
By strengthening Lemma~\ref{lemma:kfingerprintmatricesfromfewsamples}, as pointed out after its proof, a result similar to Theorem~\ref{theorem:symmetricalgorithmsandflatdistributions} can be obtained for the number of samples scaling like $\landauO((1/\epsilon)^\alpha)$ for all $\alpha < 1/2$.

\section{The Boson-Sampling distribution is flat}
\label{sec:thebosonsamplingdistributionisflat}
In this section we show that the Boson-Sampling distribution is extremely flat with high probability.
The strategy is as follows:
First we relate the probability measure induced on the matrices $U_S$ described in Section~\ref{sec:setting} to a Gaussian measure $\mu_{G_S(\sigma)}$.
Then we use measure concentration for $\mu_{G_S(\sigma)}$ to prove $\epsilon$-flatness.

A crucial step in the proof of the main result of Ref.~\cite{Aaronson} is to show that if $m$ is sufficiently large compared to $n$ and $U \sim \mu_H$, i.e., $U$ is chosen from the Haar measure $\mu_H$ on $U(m)$, then, for any fixed $S \in \Phi^*_{m,n}$, the measure on $\C^{n\times n}$ induced by the map $g_S = (U \mapsto U_S)$ is close to $\mu_{G(1/\sqrt{m})}$, where $\mu_{G(\sigma)}$ is the measure obtained by choosing the real and imaginary part of every entry of an $n\times n$ matrix independently from a Gaussian distribution with mean zero and standard deviation $\sigma$.
\begin{lemma}[Theorem~5.2 in Ref.~\cite{Aaronson}]\label{lemma:multiplicativeerrorbound}
  Let $f:\C^{n\times n} \to \left[0,1\right]$ be measurable and $\delta>0$ with the property that $m \geq (n^5/\delta) \ln^2(n/\delta)$.
  Then
  \begin{equation} \label{eq:multiplicativeerrorbound}
    \forall S \in \Phi^*_{m,n}:\quad \E_{U \sim \mu_H} f(U_S)  \leq (1 + \landauO(\delta)) \E_{X \sim \mu_{G(1/\sqrt{m})}} f(X) .
  \end{equation}
\end{lemma}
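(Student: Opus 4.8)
The plan is to prove the equivalent pointwise statement that the density $p_H$ of $U_S$ under $U\sim\mu_H$ is dominated by $(1+\landauO(\delta))$ times the density $p_G$ of $\mu_{G(1/\sqrt m)}$ on $\C^{n\times n}$. Because $f$ is allowed to be any measurable function into $[0,1]$, testing the asserted inequality against indicators $f=\1_E$ of measurable sets $E$ shows that it holds for every such $f$ if and only if $p_H(W)\le(1+\landauO(\delta))\,p_G(W)$ for almost every $W$; conversely such a pointwise bound integrates against any nonnegative $f$. So the entire content of the lemma is a uniform bound on the ratio $p_H/p_G$. Before estimating it I would dispose of the quantifier over $S$: for collision-free $S$ the matrix $U_S$ is the submatrix of $U$ on the first $n$ columns and the $n$ rows indexed by $\supp(S)$, and since $\mu_H$ is invariant under left and right multiplication by permutation matrices, the law of $U_S$ is that of the top-left $n\times n$ block for every $S\in\Phi^*_{m,n}$. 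It therefore suffices to analyse a single fixed block.

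Next I would write both densities in closed form. By the known formula for truncations of Haar-random unitary matrices, for $m\ge 2n$ the top-left $n\times n$ block $W$ has density $p_H(W)=c_H\,\det(I_n-W^\dagger W)^{m-2n}$ supported on the contractions $\|W\|_{\mathrm{op}}<1$, whereas, fixing the Gaussian covariance so that each entry has the same second moment $\E|U_{ij}|^2=1/m$ as a Haar entry, $p_G(W)=c_G\,\exp(-m\,\Tr(W^\dagger W))$. With $A=W^\dagger W$ and eigenvalues $\lambda_1,\dots,\lambda_n\in[0,1)$ the ratio factorizes as
\begin{equation}
  \frac{p_H(W)}{p_G(W)}=\frac{c_H}{c_G}\,\exp\Big(\sum_{i=1}^n g(\lambda_i)\Big),\qquad g(\lambda)=(m-2n)\log(1-\lambda)+m\,\lambda .
\end{equation}

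The heart of the argument is then a maximization. Each scalar $g$ vanishes at $0$, rises to a unique interior maximum at $\lambda^\ast=2n/m$, and then falls to $-\infty$ as $\lambda\to1$; a short computation gives $g(\lambda^\ast)=2n^2/m$ up to lower-order terms, so, the eigenvalues being uncoupled, $\sum_i g(\lambda_i)\le 2n^3/m+\dots$. The normalization ratio $c_H/c_G=Z_G/Z_H$ is governed by the same Laplace-type integrals,
\begin{equation}
  Z_G=\int\exp(-m\,\Tr(W^\dagger W))\,\mathrm{d}W=(\pi/m)^{n^2},\qquad Z_H\approx(\pi/(m-2n))^{n^2},
\end{equation}
whence $\log(c_H/c_G)=-2n^3/m+\dots$. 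The two leading contributions cancel, the ratio equals $1$ up to higher-order corrections, and the hypothesis $m\ge(n^5/\delta)\ln^2(n/\delta)$ is comfortably strong enough to push every residual below $\landauO(\delta)$.

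The main obstacle is precisely this last estimate: the cancellation of the $2n^3/m$ terms is exact only at leading order, so the real work lies in a uniform, quantitative control of everything that remains — the higher Taylor orders of $g$, the error incurred in replacing $\det(I_n-A)^{m-2n}$ by $\exp(-(m-2n)\Tr A)$ inside $Z_H$, and the compact-support cutoff $\|W\|_{\mathrm{op}}<1$ (which discards only an exponentially small Gaussian tail and is harmless). Assembling these into a single bound of size $\landauO(\delta)$, valid for every contraction $W$, is where the stated lower bound on $m$ is consumed; a cruder but more robust route, closer to the one in the reference, instead couples $U$ to a Gaussian matrix through Gram--Schmidt orthonormalization and bounds the perturbation of the $n\times n$ block directly, which likely explains the somewhat generous power $n^5$.
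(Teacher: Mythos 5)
You should first be aware that the paper contains no proof of this statement: it is imported verbatim as Theorem~5.2 of Ref.~\cite{Aaronson}, so the only meaningful comparison is with the proof in that reference --- which your sketch essentially reconstructs. The skeleton is sound and matches the source: the reduction from arbitrary measurable $f:\C^{n\times n}\to[0,1]$ to a pointwise a.e.\ bound on the density ratio (via indicator functions and Radon--Nikodym) is correct; disposing of the quantifier over $S\in\Phi^*_{m,n}$ by permutation invariance of $\mu_H$ is correct, and is in fact the same symmetry the paper itself exploits to extend the lemma to general $S$ in Lemma~\ref{lemma:multiplicativeerrorbound2}; and the closed-form truncated-Haar density $p_H(W)=c_H\det(I_n-W^\dagger W)^{m-2n}$ on contractions, compared against the Gaussian, is precisely the route of the cited proof (your closing speculation that the reference instead uses a Gram--Schmidt coupling is off the mark). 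One small point you fixed silently: the paper's $\mu_{G(1/\sqrt m)}$ literally has $\E|x_{j,k}|^2=2/m$ (real and imaginary parts each of standard deviation $1/\sqrt m$), whereas Haar entries have $\E|U_{j,k}|^2=1/m$; your renormalization to matching second moments is the convention of Ref.~\cite{Aaronson}, and the discrepancy deserves a remark rather than a silent repair.

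There are, however, two genuine problems. First, a concrete bookkeeping error at exactly the order you identify as the heart of the argument: the claimed cancellation of $n\,g(\lambda^\ast)=2n^3/m$ against $\log(c_H/c_G)=-2n^3/m$ does not happen. Your estimate $Z_H\approx(\pi/(m-2n))^{n^2}$ drops the quadratic term of $\log\det$, which contributes at the \emph{same} order: writing $\sum_i g(\lambda_i)=2n\Tr A-\tfrac{m-2n}{2}\Tr A^2-\cdots$ with $A=W^\dagger W$, a Wick computation under the Gaussian gives $\E\Tr A=n^2/m$ and $\E\Tr A^2\approx 2n^3/m^2$, hence $\log(Z_H/Z_G)=\log\E_G\,\e^{\sum_i g(\lambda_i)}\approx 2n^3/m-n^3/m=n^3/m$, so $\log(c_H/c_G)\approx-n^3/m$, not $-2n^3/m$. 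The residual after ``cancellation'' is therefore $\Theta(n^3/m)$ --- the same order as the leading terms, not ``higher-order corrections'' as you assert. The lemma survives only because the hypothesis is generous: $m\geq(n^5/\delta)\ln^2(n/\delta)$ gives $n^3/m\leq\delta/(n^2\ln^2(n/\delta))=\landauO(\delta)$, consistent with your own suspicion that the power $n^5$ leaves slack. Second, even granting the corrected bookkeeping, what you have written is a program rather than a proof: the uniform quantitative control of the higher Taylor orders of $g$, of the exact normalization constant $c_H$, and of the cutoff at $\|W\|_{\mathrm{op}}<1$ --- which you explicitly defer --- is where the entire content of the lemma resides, and no reader could certify the $\landauO(\delta)$ constant from the sketch as it stands.
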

It is known that $m \geq c\,n^{\nu}$ with $\nu > 2$ and $0< c \in \landauO(1)$ is necessary for closeness of $\mu_H \circ g_S^{-1}$ and $\mu_{G(1/\sqrt{m})}$.
As this is a crucial ingredient to the proof of hardness of Ref.~\cite{Aaronson}, 
we will from now on assume that $m \geq c\,n^{\nu}$ with $\nu > 2$ and $0 < c \in \landauO(1)$. 

Lemma~\ref{lemma:multiplicativeerrorbound} is not strong enough for our purpose, as we must be able to control all of $\Phi_{m,n}$ and not only the collision-free subspace.
Fortunately, the above lemma extends naturally to all $S \in \Phi_{m,n}$, but first we need some notation:
For every sequence $S$, let $\tilde S$ be the sequence obtained from $S$ by removing all the zeros, i.e,
\begin{equation}\label{eq:tildeSdef}
  \tilde S = (\tilde{s}_1,\dots, \tilde{s}_{|\tilde S|}) \coloneqq (s \in S: s > 0) .
\end{equation}
Let $\mu_{G_S(\sigma)}$ be the probability measure on $\C^{n \times n}$ obtained by drawing the real and imaginary part of every entry of a $|\tilde S| \times n$ matrix independently from a Gaussian distribution with mean zero and standard deviation $\sigma$ and then for all $j \in [|\tilde S |]$ taking $\tilde{s}_j$ copies of the $j^{\text{th}}$ row of this matrix.
\begin{lemma}[Multiplicative error bound]\label{lemma:multiplicativeerrorbound2}
  Let $f:\C^{n\times n} \to \left[0,1\right]$ be measurable and $\delta>0$ with the property that $m \geq (n^5/\delta) \ln^2(n/\delta)$. Then for all $S \in \Phi_{m,n}$
  \begin{equation} \label{eq:multiplicativeerrorbound2}
    \E_{U \sim \mu_H}f(U_{S}) \leq (1 + \landauO(\delta)) \E_{X \sim \mu_{G_S(1/\sqrt{m})}} f(X).
  \end{equation}
\end{lemma}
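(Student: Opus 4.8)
\section*{Proof proposal}

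The plan is to reduce the statement to the collision-free case already established in Lemma~\ref{lemma:multiplicativeerrorbound} by factoring both measures through the operation of row duplication. Observe that $U_S$ depends on $U$ only through the $|\tilde S| \times n$ submatrix $V_S$ obtained by taking the first $n$ columns of $U$ and keeping only the $|\tilde S|$ rows indexed by the support $\{j : s_j > 0\}$ of $S$: if $R_S : \C^{|\tilde S| \times n} \to \C^{n \times n}$ denotes the map that replaces the $j$-th row by $\tilde s_j$ copies of itself, then $U_S = R_S(V_S)$. Likewise, by its very definition $\mu_{G_S(\sigma)}$ is precisely the pushforward under $R_S$ of the Gaussian measure on $\C^{|\tilde S| \times n}$ whose entries have independent mean-zero real and imaginary parts of standard deviation $\sigma$; denote this latter measure by $G_{|\tilde S| \times n}(\sigma)$. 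Hence, setting $g \coloneqq f \circ R_S : \C^{|\tilde S| \times n} \to [0,1]$, which is again measurable with values in $[0,1]$, the claimed inequality \eqref{eq:multiplicativeerrorbound2} is equivalent to
\begin{equation}
  \E_{U \sim \mu_H} g(V_S) \leq (1 + \landauO(\delta)) \E_{W \sim G_{|\tilde S| \times n}(1/\sqrt m)} g(W) ,
\end{equation}
which is nothing but a rectangular $|\tilde S| \times n$ version of Lemma~\ref{lemma:multiplicativeerrorbound}.

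First I would establish this rectangular version from the square one by padding. Since the hypothesis $m \geq (n^5/\delta)\ln^2(n/\delta)$ forces $m \geq n$, there is a collision-free sequence $S' \in \Phi^*_{m,n}$ whose support is a superset of the support of $S$ (put a single boson on each of the $|\tilde S|$ modes of $\supp(S)$ and distribute the remaining $n - |\tilde S|$ bosons on distinct modes outside $\supp(S)$). The matrix $U_{S'}$ is then an $n \times n$ submatrix of the first $n$ columns of $U$ that contains $V_S$ as the subcollection of rows indexed by $\supp(S)$. I would define $h : \C^{n\times n} \to [0,1]$ to extract from its argument exactly those rows, assemble them in the order inherited from $S$ into a $|\tilde S| \times n$ matrix, and apply $g$; then $h$ is measurable, takes values in $[0,1]$, and satisfies $h(U_{S'}) = g(V_S)$ identically in $U$. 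Applying Lemma~\ref{lemma:multiplicativeerrorbound} to $h$ and $S'$ gives $\E_{U \sim \mu_H} h(U_{S'}) \leq (1 + \landauO(\delta)) \E_{X \sim \mu_{G(1/\sqrt m)}} h(X)$.

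It remains to recognise the two sides of this bound. The left-hand side equals $\E_{U \sim \mu_H} g(V_S)$ by construction. On the right-hand side $X$ has independent entries, so the subcollection of rows that $h$ extracts is itself distributed as $G_{|\tilde S| \times n}(1/\sqrt m)$, while the remaining $n - |\tilde S|$ rows are independent of it and are ignored by $h$; marginalising them out yields $\E_{X \sim \mu_{G(1/\sqrt m)}} h(X) = \E_{W \sim G_{|\tilde S| \times n}(1/\sqrt m)} g(W)$. Combining the displays gives the rectangular inequality, and undoing the factorisation through $R_S$ gives \eqref{eq:multiplicativeerrorbound2} for all $S \in \Phi_{m,n}$.

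I expect the only real subtlety to be bookkeeping: keeping track of which rows of the padded $n \times n$ matrix correspond to $\supp(S)$ and in which order, so that $h(U_{S'})$ reproduces $g(V_S)$ exactly and so that the marginalisation on the Gaussian side is over genuinely independent rows. None of this affects measurability or the $[0,1]$ range, so once the reduction is set up correctly the result follows immediately from Lemma~\ref{lemma:multiplicativeerrorbound}; no new concentration or approximation argument is needed.
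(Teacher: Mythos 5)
Your proposal is correct and follows essentially the same route as the paper: both reduce to Lemma~\ref{lemma:multiplicativeerrorbound} by factoring $f$ through the row-duplication map (your $R_S$; the paper's embedding--projection pair $\eta\circ\pi$) and by identifying $\mu_{G_S(\sigma)}$ as the pushforward of the iid Gaussian measure under duplication, so that the collision-free bound transfers to general $S \in \Phi_{m,n}$. The only difference is cosmetic: the paper evaluates the composed test function at the fixed collision-free sequence $1_n$ and invokes permutation invariance of the Haar measure to pass from $U_S$ to $U_{1_n}$, whereas you pad $\supp(S)$ to a collision-free $S'$ so that the identity $h(U_{S'})=g(V_S)$ holds pointwise and no invariance step is needed.
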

\begin{proof}
  Let $S \in \Phi_{m,n}$, $\tilde S$ as in Eq.~\eqref{eq:tildeSdef} and $m'\coloneqq |\tilde S|$.
  Define $v$ to be the sequence containing $\tilde s_j$ times the integer $j$ for every $j \in [m']$ in increasing order and $w$ the sequence containing the positions of each of the first of the repeated rows in $U_S$, i.e.,
  \begin{align}
    v &\coloneqq (\underbrace{1,\ldots,1}_{\tilde s_1},
    \underbrace{2,\ldots,2}_{\tilde s_2},\ldots,\underbrace{m',\ldots, m'}_{\tilde s_m'})
    \in (\Z^+)^n,
    \\
    w &\coloneqq (1,1+\tilde s_1, 1+\tilde s_1+\tilde s_2, \ldots, 1+\sum_{j=1}^{m'-1} \tilde s_j)
    \in (\Z^+)^{m'}.
  \end{align}
  The sequence $v$ defines a linear embedding $\eta : \C^{m' \times n} \to  \C^{n \times n}$ component wise by
  \begin{align}
    \eta(Y)_{i,j} \coloneqq Y_{v_i, j} \quad \forall i, j \in [n],
  \end{align}
  i.e., $\eta(Y)$ has $s_j$ copies of the $j$-th row of $Y$. 
  The sequence $w$ defines a linear projection $\pi : \C^{n \times n} \to \C^{m' \times n}$ by
  \begin{align}
    \pi(X)_{i,j} \coloneqq X_{w_i, j} \quad \forall i \in [m'],\ j \in [n],
  \end{align}
  in particular, $\pi(U_S)$ contains only the first out of each series of the repeated rows in $U_S$. 
  Note that $\eta \circ \pi : \C^{n \times n} \to \C^{n \times n}$ 
  is a projection onto the subspace of matrices that have the same repetition structure as $U_S$.
  Let
  \begin{align}\label{eq:deffS}
    f_S \coloneqq f \circ \eta \circ \pi ,
  \end{align}
  then $f_S(U_S) = f(U_S)$ only depends on the first of the repeated rows in $U_S$ and is independent of all the other rows.
  Since the Haar measure is permutation-invariant,
  \begin{align}
    \E_{U \sim \mu_H} f_S(U_S) &= \E_{U \sim \mu_H} f_S(U_{1_n}).
  \end{align}
  Hence, Lemma~\ref{lemma:multiplicativeerrorbound} yields the inequality in the calculation
  \begin{align}
    \E_{U \sim \mu_H} f(U_S) 
    &= \E_{U \sim \mu_H} f_S(U_{1_n})
    \\
    &\leq (1 + \landauO(\delta)) \E_{X \sim \mu_{G(1/\sqrt m)}}f_S(X)
    \\
    &=(1 + \landauO(\delta)) \E_{X \sim \mu_{G_S(1/\sqrt m)}}f(X) ,
  \end{align}
  which finishes the proof.
  
\end{proof}
In addition to the multiplicative error bound we need a concentration result for the Gaussian measure $\mu_{G_S(\sigma)}$.
\begin{lemma}[Concentration of the Gaussian measure $\mu_{G_S(\sigma)}$] \label{lemma:concentrationofthegaussianmeasure}
  For all $n,m \in \Z^+$, all $S \in \Phi_{m,n}$ and all $\xi > 0$
  \begin{equation}
    \Pr_{X\sim \mu_{G_S(\sigma)}}\left[ \max_{j,k\in\left[n\right]} |x_{j,k}| \geq \xi \right]
    \leq 1 - \left(1 - \Erfc\left(\frac{\xi}{\sqrt{2}\,\sigma }\right)\right)^{n^2} .
  \end{equation}
\end{lemma}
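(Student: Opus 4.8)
The plan is to pass to the complementary event and to exploit the fact that a matrix $X$ drawn from $\mu_{G_S(\sigma)}$ carries very little randomness. Writing the quantity of interest as $\Pr[\max_{j,k\in[n]}|x_{j,k}|\ge\xi] = 1 - \Pr[\max_{j,k\in[n]}|x_{j,k}|<\xi]$, I would first use the construction of $\mu_{G_S(\sigma)}$: the $n\times n$ matrix $X$ has only $|\tilde S|$ genuinely distinct rows, each of length $n$, the remaining rows being exact copies obtained by the embedding $\eta$. Consequently the whole family $\{x_{j,k}\}$ is determined by the $|\tilde S|\cdot n$ entries of the underlying $|\tilde S|\times n$ Gaussian matrix, whose real and imaginary parts are independent centered Gaussians of standard deviation $\sigma$. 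In particular $\max_{j,k}|x_{j,k}|$ is the maximum over these independent coordinates, so the event $\{\max_{j,k}|x_{j,k}|<\xi\}$ factorizes as an intersection of independent single-coordinate events.

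Next I would estimate the single-coordinate non-exceedance probability. For one centered Gaussian coordinate of standard deviation $\sigma$ the probability of staying below $\xi$ in modulus is exactly $1-\Erfc(\xi/(\sqrt2\,\sigma))$, the tail being the standard Gaussian tail. By independence the probability that all relevant coordinates stay below $\xi$ equals the corresponding product $(1-\Erfc(\xi/(\sqrt2\,\sigma)))^{N}$, where $N$ is the number of independent coordinates. The final step is purely combinatorial and monotone: since the nonzero parts of $S$ sum to $n$ one has $|\tilde S|\le n$, hence $N\le n^2$, and because $0\le 1-\Erfc(\xi/(\sqrt2\,\sigma))\le 1$ the map $t\mapsto(1-\Erfc(\xi/(\sqrt2\,\sigma)))^{t}$ is nonincreasing. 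Replacing $N$ by the larger value $n^2$ can therefore only decrease the product, and taking complements yields the stated upper bound $1-(1-\Erfc(\xi/(\sqrt2\,\sigma)))^{n^2}$.

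I expect the main obstacle to be the careful bookkeeping of the row repetitions. One must argue that identical rows contribute no new randomness, so that the complement event genuinely factorizes over at most $n^2$ \emph{independent} coordinates rather than over all $n^2$ matrix entries; a naive union bound over all entries would be strictly looser and would not reproduce the clean product form appearing on the right-hand side. The quantitative heart of the argument is thus twofold: pinning down the single-coordinate Gaussian tail in the precise form $\Erfc(\xi/(\sqrt2\,\sigma))$, and matching the number of independent degrees of freedom to the exponent $n^2$ via the inequality $|\tilde S|\le n$ together with the monotonicity step. Everything else is the standard reduction of a maximum of independent random variables to a product of single-variable probabilities.
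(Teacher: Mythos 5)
Your proof is correct and takes essentially the same route as the paper: the paper likewise passes to the complementary event, uses the single-entry tail $\Erfc\big(\xi/(\sqrt{2}\,\sigma)\big)$ together with independence to obtain $\big(1-\Erfc\big(\xi/(\sqrt{2}\,\sigma)\big)\big)^{n^2}$ for the fully i.i.d.\ measure $\mu_{G(\sigma)}$, and then observes that the row repetitions in $\mu_{G_S(\sigma)}$ can only increase the probability that all entries stay below $\xi$. Your explicit bookkeeping --- the matrix is determined by $|\tilde S|\,n \leq n^2$ independent coordinates, and $t \mapsto \big(1-\Erfc\big(\xi/(\sqrt{2}\,\sigma)\big)\big)^{t}$ is nonincreasing --- is exactly the rigorous form of the paper's verbal remark that the additional dependency ``only decreases the chance of having an exceptionally large entry.''
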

\begin{proof}
  For Gaussian random variables we have 
  \begin{equation}
    \forall \xi>0,\ j,k\in\left[n\right]:\quad \Pr_{X\sim \mu_{G(\sigma)}} \left[|x_{j,k}| \geq \xi \right] = \Erfc\left(\frac{\xi}{\sqrt{2}\,\sigma}\right)
  \end{equation}
  where 
  \begin{equation}
    \Erfc\left(\frac{\xi}{\sqrt{2}\,\sigma}\right) \coloneqq 2 \int_{\xi}^\infty \frac{\e^{-\frac{x^2}{2\,\sigma^2}}}{\sqrt{2\,\pi\,\sigma^2}} \dd x 
  \end{equation}
  is the complementary error function.
  This implies that
  \begin{equation}
    \forall \xi>0:\quad \Pr_{X\sim \mu_{G(\sigma)}}\left[ \forall j,k\in\left[n\right]: |x_{j,k}| \leq \xi \right] = \left(1- \Erfc\left(\frac{\xi}{\sqrt{2}\,\sigma}\right)\right)^{n^2} .
  \end{equation}
  It is also true that 
  \begin{equation}
    \begin{split}
      \forall S \in \Phi_{m,n}, \xi>0:\quad 
      \Pr_{X\sim \mu_{G_S(\sigma)}}&\left[ \forall j,k\in\left[n\right]: |x_{j,k}|\leq  \xi \right]  \\ \geq 
      \Pr_{X\sim \mu_{G(\sigma)}}&\left[ \forall j,k\in\left[n\right]: |x_{j,k}| \leq \xi \right] ,
    \end{split}
  \end{equation}
  because the additional dependency of the entries of $X \sim \mu_{G_S(\sigma)}$ only decreases the chance of having an exceptionally large entry.
\end{proof}

\begin{theorem}[Flatness of the Boson-Sampling distribution] \label{theorem:bosonsamplingdistributionisflat}
  Let $\nu > 3$. Then for every $m \in \Omega(n^{\nu})$ 
  \begin{equation}
    - \ln \left(\Pr_{U \sim \mu_H}\left[ \exists S\in\Phi_{m,n}: \Pr_{\mcD_{U}}\left[S\right] \geq \e^{-2\,n} \right] \right) \in \landauO\left(n^{\nu-2-1/n}\right) .
  \end{equation}
\end{theorem}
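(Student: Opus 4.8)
The plan is to bound the ``bad'' probability $p \coloneqq \Pr_{U \sim \mu_H}[\exists S \in \Phi_{m,n}: \Pr_{\mcD_U}[S] \geq \e^{-2n}]$ from above by a union bound over the sample space, reducing each individual term to a Gaussian tail estimate. First I would write $p \leq \sum_{S \in \Phi_{m,n}} \Pr_{U \sim \mu_H}[\Pr_{\mcD_U}[S] \geq \e^{-2n}]$ and, for each fixed $S$, apply Lemma~\ref{lemma:multiplicativeerrorbound2} to the indicator $f(X) \coloneqq \mathbf{1}[|\Perm(X)|^2/\prod_j s_j! \geq \e^{-2n}]$ (which is measurable and $[0,1]$-valued), obtaining $\Pr_{U \sim \mu_H}[\Pr_{\mcD_U}[S] \geq \e^{-2n}] \leq (1 + \landauO(\delta))\,\Pr_{X \sim \mu_{G_S(1/\sqrt m)}}[|\Perm(X)|^2/\prod_j s_j! \geq \e^{-2n}]$. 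This trades the Haar average for a Gaussian computation at the cost of a multiplicative factor.

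The crucial reduction is to convert the permanent condition into a condition on the largest entry, so that the concentration estimate of Lemma~\ref{lemma:concentrationofthegaussianmeasure} applies. Using the crude bound $|\Perm(X)| \leq n!\,(\max_{j,k}|x_{j,k}|)^n$ (valid also for the repeated-row matrices drawn from $\mu_{G_S(\sigma)}$, since $\Perm$ is a sum of $n!$ products of $n$ entries) together with $\prod_j s_j! \geq 1$, the event $|\Perm(X)|^2/\prod_j s_j! \geq \e^{-2n}$ forces $\max_{j,k}|x_{j,k}| \geq \xi_0 \coloneqq \e^{-1}/(n!)^{1/n}$. Lemma~\ref{lemma:concentrationofthegaussianmeasure}, the elementary inequality $1 - (1-q)^{n^2} \leq n^2 q$, and the Gaussian tail bound $\Erfc(y) \leq \e^{-y^2}$ then yield $\Pr_{X \sim \mu_{G_S(1/\sqrt m)}}[\max_{j,k}|x_{j,k}| \geq \xi_0] \leq n^2\,\e^{-\xi_0^2 m/2}$, a bound that is uniform in $S$.

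It remains to assemble these pieces and verify the claimed rate. With Stirling in the form $n! \leq n^{n+1/2}$ one has $(n!)^{2/n} \leq n^{2 + 1/n}$, so $\xi_0^2 m/2 = \e^{-2} m/(2\,(n!)^{2/n}) \geq (\e^{-2} c/2)\,n^{\nu - 2 - 1/n}$ once $m \geq c\,n^\nu$. Collecting the union bound over the sample space and using Eq.~\eqref{eq:Phibound} gives
\begin{equation}
  -\ln p \geq \frac{\xi_0^2 m}{2} - \ln|\Phi_{m,n}| - 2\ln n - \ln(1 + \landauO(\delta)),
\end{equation}
where $\ln|\Phi_{m,n}| \in \landauO(n \ln n)$. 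The first term is of order $n^{\nu-2-1/n}$, whereas the corrections are only $\landauO(n\ln n)$; since $\nu > 3$ we have $\nu - 2 > 1$, so the Gaussian tail dominates the (super-exponentially large) sample space and $-\ln p$ is at least of order $n^{\nu-2-1/n}$, establishing the stated growth. In particular $p$ is supra-exponentially small, which is the content fed into Theorem~\ref{theorem:mainresulttwo}.

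The main obstacle is the tension between the regime $\nu > 3$ targeted here and the hypothesis $m \geq (n^5/\delta)\ln^2(n/\delta)$ required by Lemma~\ref{lemma:multiplicativeerrorbound2}: for $3 < \nu < 5$ a constant $\delta$ is not admissible. I would resolve this by letting $\delta$ grow, choosing it polynomially in $n$ (roughly $\delta \sim n^{5-\nu}\ln^2 n$) just large enough to satisfy the hypothesis; the multiplicative factor then contributes only $\ln(1 + \landauO(\delta)) \in \landauO(\ln n)$ to $-\ln p$, negligible against $n^{\nu-2}$. The second delicate point is that the permanent bound $|\Perm(X)| \leq n!\,\xi^n$ is extremely crude, but because the threshold $\e^{-2n}$ is exponentially larger than the typical value $|\Perm(X)|^2 \approx n!/m^n$, there is ample room: the required deviation $\xi_0 \approx 1/n$ of a single entry from its typical size $1/\sqrt m \approx n^{-\nu/2}$ already costs $\e^{-\Omega(n^{\nu-2})}$, and the condition $\nu > 3$ is precisely what keeps the entropy $\ln|\Phi_{m,n}|$ of the sample space from overwhelming this gain.
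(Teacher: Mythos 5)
Your proposal is correct and follows essentially the same route as the paper's proof: a union bound over $\Phi_{m,n}$, Lemma~\ref{lemma:multiplicativeerrorbound2} applied to the indicator function of the bad event, the crude bound $|\Perm(X)|^2 \leq (n!)^2\,(\max_{j,k}|x_{j,k}|)^{2n}$ together with $\prod_j s_j! \geq 1$, Lemma~\ref{lemma:concentrationofthegaussianmeasure} with $\Erfc(x)\leq\e^{-x^2}$ and Stirling, and finally the choice $\epsilon=\e^{-2n}$. The only cosmetic deviations are your simpler estimate $1-(1-q)^{n^2}\leq n^2 q$ in place of the paper's geometric-series bound, and your polynomially growing $\delta\sim n^{5-\nu}\ln^2 n$ where the paper simply sets $\delta=n$ (which makes the hypothesis $m\geq(n^5/\delta)\ln^2(n/\delta)$ vacuous and costs only $\ln(1+\landauO(n))$) --- both choices are equally valid.
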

\begin{proof} 
  Using the union bound (also known as Boole's inequality) we obtain that for every $\epsilon > 0$
  \begin{align}
    &\Pr_{U \sim \mu_H}\left[ \exists S\in\Phi_{m,n}: \Pr_{\mcD_{U}}\left[S\right] \geq \epsilon \right] \\
    &\leq \sum_{S\in\Phi_{m,n}} \Pr_{U \sim \mu_H}\left[ \Pr_{\mcD_{U}}\left[S\right] \geq \epsilon \right] \\
    &\leq |\Phi_{m,n}| \max_{S\in\Phi_{m,n}} \Pr_{U \sim \mu_H}\left[ \Pr_{\mcD_{U}}\left[S\right] \geq \epsilon \right] \\
    &= |\Phi_{m,n}| \max_{S\in\Phi_{m,n}} \Pr_{U \sim \mu_H}\left[ \frac{|\Perm(U_{S})|^2}{\prod_{j=1}^m (s_j!)} \geq \epsilon \right] .
  \end{align}
  Applying Lemma~\ref{lemma:multiplicativeerrorbound2} for the $S$ that yields the maximum with $\delta = n$ and the indicator function
  \begin{equation}
    f(U_S) =
    \begin{cases}
      1 & \text{if } \frac{|\Perm(U_{S})|^2}{\prod_{j=1}^m (s_j!)} \geq \epsilon \\
      0 & \text{otherwise} 
    \end{cases}
  \end{equation}
  yields
  \begin{equation}
    \begin{split} 
      &\Pr_{U \sim \mu_H}\left[ \exists S\in\Phi_{m,n}: \Pr_{\mcD_{U}}\left[S\right] \geq \epsilon \right] \\
      &\leq (1+ \landauO(n))\, |\Phi_{m,n}| \max_{S\in\Phi_{m,n}} \Pr_{X \sim \mu_{G_S(1/\sqrt{m})}} \left[ \frac{|\Perm(X)|^2}{\prod_{j=1}^m (s_j!)} \geq \epsilon \right] . \label{eq:lastinequalitybeforecudeboundonpermx}
    \end{split}
  \end{equation}
  Recall that the permanent $\Perm(X)$ of a matrix $X = (x_{j,k}) \in \C^{n \times n }$ is defined as
  \begin{equation} \label{eq:definitionpermanent}
    \Perm(X) \coloneqq \sum_{\tau \in \Sym([n])} \prod_{j=1}^n x_{j,\tau(j)} ,
  \end{equation}
  where $\Sym([n])$ is the symmetric group acting on $[n]$.
  This implies that
  \begin{equation} \label{eq:stupidpermamentbound}
    \frac{|\Perm(X)|^2}{\prod_{j=1}^m (s_j!)} \leq |\Perm(X)|^2 \leq (n!)^2\,\left(\max_{j,k \in \left[n\right]} |x_{j,k}|\right)^{2n} .
  \end{equation}
  Hence, for every $S\in\Phi_{m,n}$ and every $\epsilon > 0$
  \begin{equation}
    \Pr_{X \sim \mu_{G_S(1/\sqrt{m})}} \left[ \frac{|\Perm(X)|^2}{\prod_{j=1}^m (s_j!)} \geq \epsilon \right] 
    \leq \Pr_{X \sim \mu_{G_S(1/\sqrt{m})}} \left[ \max_{j,k \in \left[n\right]} |x_{j,k}| \geq \left(\frac{\sqrt{\epsilon}}{n!}\right)^{1/n}   \right] .
  \end{equation}
  Now we use Lemma~\ref{lemma:concentrationofthegaussianmeasure} with
  \begin{equation}
    \xi = \left(\frac{\sqrt{\epsilon}}{n!}\right)^{1/n},
  \end{equation}
  and Eq.~\eqref{eq:Phibound} to arrive at
  \begin{equation}
    \begin{split}
      &\Pr_{U \sim \mu_H}\left[ \exists S\in\Phi_{m,n}: \Pr_{\mcD_{U}}\left[S\right] \geq \epsilon \right] \\
      &\leq (1+ \landauO(n))\,(2\,(c+1)\,\e)^n\,n^{(\nu-1)n} \left( 1 - \left(1 - \Erfc\sqrt{\frac{c}{2} \frac{\epsilon^{1/n}\,n^{\nu} }{(n!)^{2/n} }}\right)^{n^2} \right). \label{eq:boundonlargeprobabilitystillwitherfc}
    \end{split}
  \end{equation}
  Bounding the complementary error function by \cite{Ermolova}
    \begin{equation}
      \Erfc\left(x\right) \leq \e^{-x^2} ,
  \end{equation}
  we obtain
  \begin{align}
    1 - \left(1 - \Erfc(x) \right)^{n^2} 
    &\leq 1 - \left(1 - \e^{-x^2} \right)^{n^2} 
    = 1 - \sum_{k=0}^{n^2} \binom{n^2}{k}\,(-\e^{-x^2})^k \\
    &= \sum_{k=1}^{n^2} \binom{n^2}{k}\,\e^{-x^2 k}\,(-1)^{k-1} 
    \leq \sum_{k=1}^{n^2} (n^2 \e/k)^k\,\e^{-x^2 k}  \\
    &=\sum_{k=1}^{n^2} (n^2\,\e^{-x^2 + 1} )^k .
  \end{align}
  If $x$ is sufficiently large such that 
  \begin{equation}
    n^2\,\e^{-x^2+1} \leq \frac{1}{2} < 1,
  \end{equation}
  the geometric series converges and
  \begin{align}    
    \sum_{k=1}^{n^2} (n^2 \e^{-x^2+1} )^k &\leq \frac{n^2 \e^{-x^2+1}}{1 - n^2 \e^{-x^2+1} }\\
    &\leq 2\,n^2\,\e^{-x^2+1}.
  \end{align}
  Hence, for the bound \eqref{eq:boundonlargeprobabilitystillwitherfc} to become meaningful it is sufficient that the argument of the square root in the error function grows slightly faster than linear with $n$.
  Because of the bound $n! \leq \e^{1-n}\,n^{n+1/2}$ (a variant of Stirling's approximation) we have for the argument of the square root in Eq.~\eqref{eq:boundonlargeprobabilitystillwitherfc}
  \begin{align}\label{eq:boundfortheroot}
    \frac{c}{2} \frac{\epsilon^{1/n} n^{\nu} }{(n!)^{2/n} } \geq \frac{c}{2} \frac{\epsilon^{1/n} n^{\nu} }{\e^{2/n-2} n^{2+1/n} } = \frac{c}{2} \frac{\epsilon^{1/n}}{\e^{2/n-2}} n^{\nu-2-1/n} ,
  \end{align}
  and with the convenient choice $\epsilon = \e^{-2n}$ it follows that for all $\nu>3$
  \begin{equation}
    \begin{split}
      &\Pr_{U \sim \mu_H}\left[ \exists S\in\Phi_{m,n}: \Pr_{\mcD_{U}}\left[S\right] \geq \e^{-2n} \right] \\
      &\in \landauO\left(n^3\,(2\,(c+1)\,\e)^n\,n^{(\nu-1)n} \exp(-c\,\e^{-2/n}\,n^{\nu-2-1/n} /2) \right) .
    \end{split}
  \end{equation}
\end{proof}
The above proof of Theorem~\ref{theorem:bosonsamplingdistributionisflat} yields the result only for $\nu>3$.
This is a consequence of the $n!$ prefactor introduced in the extremely crude bound on the permanent used in Eq.~\eqref{eq:stupidpermamentbound}.
In fact, it is known that \cite{Aaronson}
\begin{equation}
  \E_{X \sim \mu_{G(1/\sqrt{m})}}[|\Perm(X)|^2] = 2^n\,n!\,m^{-n} ,
\end{equation}
so it seems likely that, the inequality in Eq.~\eqref{eq:stupidpermamentbound} can be replaced by an inequality that is fulfilled with high probability and has a $\sqrt{n!}$ prefactor instead of the $n!$.

For all $S$ in the collision-free subspace $\Phi_{m,n}^*$ we can show the improved bound:
\begin{theorem}[Flatness of the Boson-Sampling distribution on the collision-free subspace] \label{theorem:bosonsamplingdistributionisevenflateronthecollisionfreesubspace}
  Let $\nu > 1$. Then for every $1>\epsilon>0$ and $m \in \Omega(n^{\nu})$ 
  \begin{equation}
    - \ln \left(\Pr_{U \sim \mu_H}\left[ \exists S\in\Phi^*_{m,n}: \Pr_{\mcD_{U}}\left[S\right] \geq \epsilon \right] \right) \in \landauO\left( (\nu-1)\,n \ln n \right) - 2 \ln (1/\epsilon) ,
  \end{equation}
  and in particular
  \begin{equation}
    - \ln \left(\Pr_{U \sim \mu_H}\left[ \exists S\in\Phi^*_{m,n}: \Pr_{\mcD_{U}}\left[S\right] \geq n^{-n/2} \right] \right) \in \landauO\left( (\nu-2)\,n \ln n \right) .
  \end{equation}
\end{theorem}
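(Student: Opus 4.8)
The plan is to re-run the union-bound-plus-Gaussian-approximation argument of Theorem~\ref{theorem:bosonsamplingdistributionisflat}, but to throw away the wasteful deterministic permanent estimate of Eq.~\eqref{eq:stupidpermamentbound}: it is exactly the $(n!)^2$ prefactor there that costs two powers of $n$ and forces $\nu>3$. In its place I would use a \emph{second-moment} tail bound on $|\Perm(X)|^2$, controlling the fourth moment of the permanent against the square of the known second moment. This is precisely the effective replacement of the worst-case prefactor $n!$ by the typical $\sqrt{n!}$ anticipated in the remark following Theorem~\ref{theorem:bosonsamplingdistributionisflat}, and it is what lets the argument descend to $\nu>1$.

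First I would apply Boole's inequality over $\Phi^*_{m,n}$. On the collision-free sector every $s_j\in\{0,1\}$, so $\prod_{j}(s_j!)=1$ and $\Pr_{\mcD_U}[S]=|\Perm(U_S)|^2$ exactly, while $U_S$ is an honest $n\times n$ submatrix of $U$; hence Lemma~\ref{lemma:multiplicativeerrorbound} applies directly, and the heavier Lemma~\ref{lemma:multiplicativeerrorbound2} is not needed. With the indicator $f(X)=\1[\,|\Perm(X)|^2\ge\epsilon\,]\in[0,1]$ and the choice $\delta=n$ (which, as in the proof of Theorem~\ref{theorem:bosonsamplingdistributionisflat}, makes the hypothesis $m\ge(n^5/\delta)\ln^2(n/\delta)$ vacuous),
\[
  \Pr_{U\sim\mu_H}\!\big[\,|\Perm(U_S)|^2\ge\epsilon\,\big]
  \le (1+\landauO(n))\,\Pr_{X\sim\mu_{G(1/\sqrt m)}}\!\big[\,|\Perm(X)|^2\ge\epsilon\,\big] .
\]

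The key step is then to bound the Gaussian tail by Markov's inequality applied to the square of $|\Perm(X)|^2$,
\[
  \Pr_{X\sim\mu_{G(1/\sqrt m)}}\!\big[\,|\Perm(X)|^2\ge\epsilon\,\big]
  \le \epsilon^{-2}\,\E_{X\sim\mu_{G(1/\sqrt m)}}\big[\,|\Perm(X)|^4\,\big] ,
\]
together with the estimate $\E[\,|\Perm(X)|^4\,]\le C^n\,(n!)^2\,m^{-2n}$, i.e.\ the fourth moment is at most an exponential factor times the square of the second moment $\E[\,|\Perm(X)|^2\,]=2^n n!\,m^{-n}$. I would obtain this either from the known closed form for the fourth moment of the permanent of a complex Gaussian matrix, or --- most economically --- from Gaussian hypercontractivity: $\Perm(X)$ is a homogeneous polynomial of degree $n$ in the real and imaginary parts of its entries, so its real and imaginary parts each obey $\|\cdot\|_4\le 3^{n/2}\|\cdot\|_2$, whence $\E[\,|\Perm(X)|^4\,]\le C\,9^n\,(\E[\,|\Perm(X)|^2\,])^2$. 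This fourth-moment bound is the main obstacle; everything else is the routine bookkeeping already present in Theorem~\ref{theorem:bosonsamplingdistributionisflat}.

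Finally I would assemble the pieces. Using $|\Phi^*_{m,n}|=\binom{m}{n}\le(\e m/n)^n$ with $m\le c\,n^\nu$, Stirling, and Eq.~\eqref{eq:Phibound}, the product $\binom{m}{n}\,(n!)^2 m^{-2n}$ collapses to $n!\,m^{-n}\sim n^{-(\nu-1)n}$ up to an exponential factor, so that
\[
  \Pr_{U\sim\mu_H}\!\big[\exists S\in\Phi^*_{m,n}:\Pr_{\mcD_U}[S]\ge\epsilon\big]
  \le \poly(n)\,C^n\,n^{-(\nu-1)n}\,\epsilon^{-2} .
\]
Taking $-\ln$ gives $-\ln\Pr\gtrsim(\nu-1)\,n\ln n-2\ln(1/\epsilon)-\landauO(n)$, which is the first claim; specialising to $\epsilon=n^{-n/2}$, so that $2\ln(1/\epsilon)=n\ln n$, yields $-\ln\Pr\gtrsim(\nu-2)\,n\ln n$, the second. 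The entire gain over Theorem~\ref{theorem:bosonsamplingdistributionisflat} is thus localised in upgrading the deterministic $(n!)^2$ to the probabilistic $(n!)^2$ supplied by the fourth-moment bound, so the crux is indeed the control of $\E[\,|\Perm(X)|^4\,]$.
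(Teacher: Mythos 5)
Your proposal is correct and follows essentially the same route as the paper's own proof: a union bound over the collision-free sector combined with the multiplicative-error bound at $\delta=n$, Markov's inequality applied to the positive random variable $|\Perm(X)|^4$ with a fourth-moment bound of the form $\E[|\Perm(X)|^4]\le C^n\,(n!)^2\,m^{-2n}$ (the paper uses the exact value $2^{2n}(n!)^2(n+1)m^{-2n}$ cited from Refs.~\cite{Aaronson,Aaronson2}), Stirling's bound on $n!$, and the bound on the size of the sample space, followed by the specialisation $\epsilon=n^{-n/2}$. Your hypercontractivity argument is merely a self-contained alternative derivation of the fourth-moment estimate that the paper obtains by citation, and your use of Lemma~\ref{lemma:multiplicativeerrorbound} directly (instead of Lemma~\ref{lemma:multiplicativeerrorbound2}) is legitimate since on $\Phi^*_{m,n}$ the matrices $U_S$ are honest submatrices and $\mu_{G_S(\sigma)}$ coincides with $\mu_{G(\sigma)}$; neither change alters the substance of the argument.
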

\begin{proof}
  It is known that \cite{Aaronson,Aaronson2} 
  \begin{equation}
    \E_{X \sim \mu_{G(1/\sqrt{m})}}[|\Perm(X)|^4] = 2^{2n} (n!)^2\,(n+1)\,m^{-2n} .
  \end{equation}
  Hence, by using Markov's inequality for the positive random variable $|\Perm(X)|^4$ with $m = c n^\nu$ we find that for every $\epsilon>0$
  \begin{equation}
    \Pr_{X \sim \mu_{G(1/\sqrt{m})}}[|\Perm(X)|^2 \geq \epsilon ] \leq 2^{2\,n}\,(n!)^2\,(n+1)\,c^{-2n}\,n^{-2\,\nu\,n} \epsilon^{-2} .
  \end{equation}
  Using again the bound $n! \leq \e^{1-n}\,n^{n+1/2}$ this implies 
  \begin{equation}
    \Pr_{X \sim \mu_{G(1/\sqrt{m})}}[|\Perm(X)|^2 \geq \epsilon ] \leq n (n+1) 2^{2\,n}\,\e^{2-2\,n}\,c^{-2\,n}\,n^{2\,(1-\nu)\,n}\,\epsilon^{-2} .
  \end{equation}
  Hence, by Eq.~\eqref{eq:Phibound}
  \begin{equation}
    \begin{split}
      &|\Phi_{m,n}| \max_{S\in\Phi_{m,n}^*} \Pr_{X \sim \mu_{G_S(1/\sqrt{m})}} \left[ \frac{|\Perm(X)|^2}{\prod_{j=1}^m (s_j!)} \geq \epsilon \right] \\
      &\leq (2\,(c+1)\,\e)^n\,n\,(n+1)\,2^{2\,n}\,\e^{2-2\,n}\,c^{-2\,n}\,n^{(1-\nu)\,n}\,\epsilon^{-2} .
    \end{split}
  \end{equation}
  Inserting this into Eq.~\eqref{eq:lastinequalitybeforecudeboundonpermx} and taking the logarithm yields the first bound, the choice $\epsilon = n^{-n/2}$ the second bound.
\end{proof}

A derivation of a similar bound for all $S \in \Phi_{m,n}$ would prove the statement of Theorem~\ref{theorem:bosonsamplingdistributionisflat} under a weaker condition on $\nu$.
We conjecture that the statement of Theorem~\ref{theorem:bosonsamplingdistributionisflat} is true for all $\nu>2$.

\section{Efficiently simulatable instances in 1-norm} \label{sec:efficientsimulatableinstances}
In this section we finally ask the question in what settings one can expect an efficient classical simulation to be feasible even up to a small error in 1-norm.
After all, any experiment will not realise the precise ideal Boson-Sampling setting, but instead an imperfect approximation thereof.
This may provide room for the efficient classical simulation of the output distribution actually obtained.
Subsequently, we will identify a setting of this kind, which resembles those implementable with present-day linear optical circuits.
It is not claimed that the discussed scenario exactly matches realistic experiments, but it does share many features. 
We will show that efficient classical 1-norm approximate sampling is possible under the following conditions:
\begin{description}
\item[Condition 1:] The input state $\ket{1_n}$ is replaced by a Gaussian product state $\rho$ \cite{Gaussian1,Gaussian2}.
  Sources that produce such states are common in quantum optical implementations.
  In practice, many single photon sources provide approximately coherent states or mixed Gaussian states instead of states for which the probability of having more than a single photon is zero.
  If single photon sources are being generated by heralding \cite{Mosley08} a source of (Gaussian) two-mode squeezed states, the argument presented here is still valid:
After all, the entire statistics, including the heralding events, is then classically simulatable.
\item[Condition 2:] The unitary $\varphi(U)$ with $U\in U(m)$, specifying the optical network, is replaced by a Gaussian completely positive map $T: \mcB(\mcH) \to \mcB(\mcH)$,
a Gaussian channel \cite{GaussianChannel}.
Such operations cover the ideal unitary case $\rho\mapsto \varphi(U)\, \rho\, \varphi(U)^\dagger$ as well as situations involving losses in the linear optical network and aberrations due to mode matching issues.
Gaussian completely positive maps are a very accurate modelling of present linear optical experiments.
\item[Condition 3:] Projection onto Fock states is replaced by measurements described by dichotomic POVMs $\{\Pi_0,\Pi_1\}$ 
  with $\Pi_0 + \Pi_1=\1$ (``bucket detector''), 
  where the Wigner function corresponding to the no click event $\Pi_0$ for some fixed $R>0$ is given by 
  \begin{equation}\label{Wi}
    W_{\Pi_0}(r)=
    \begin{cases}
      1/(2\,\pi)& \text{ if } |r|<R \\
      0 & \text{ otherwise} 
    \end{cases} .
  \end{equation}
  This is an idealised model for imperfect photon detectors used in experiments that distinguishes the presence and the absence of photons, taking into account losses and dark counts.
\end{description}
In the latter aspect the model considered here departs the furthest from actual experiments:
While Condition 1 and 2 are usually satisfied to an extraordinarily large extent in quantum optical experiments, Condition 3 constitutes a rather crude approximation of an imperfect detector such as a realistic avalanche photodiode.
Still, it is noteworthy that these conditions are sufficient to arrive at an efficient classical simulation.
Needless to say, other detector models with positive Wigner functions for the POVM elements work equally well.

For a trace class operator $A$ acting on a system with $m$ modes, its Wigner function $W_A: \R^{2m} \rightarrow \R$ is defined as
\begin{equation}
  W_A(r) \coloneqq \frac{1}{\pi^m} \Tr[w( r )\, \Pi^{\otimes m} w( r )^\dagger A]  .
\end{equation}
Here, $\Pi$ is the single mode parity operator, $\{w(r)\}$ the family of Weyl operators, and $r \in \R^{2m}$ a vector collecting the $2m$ phase space variables.
A state is Gaussian if and only if its Wigner function is Gaussian \cite{Gaussian1,Gaussian2}.
Gaussian channels transform states with a Gaussian Wigner function into states with a Gaussian Wigner function.
The Jamiolkowski isomorphs of such maps are Gaussian states.


Expressing Hilbert-Schmidt scalar products as integrals over Wigner functions, one finds for the dark count rate
\begin{equation}
  \langle 0|\Pi_1|0\rangle =  1- \langle 0|\Pi_0|0\rangle= 1- 2 \int_0^R r\,\e^{-r^2} \dd r = \e^{-R^2} .
\end{equation}
The Wigner function of the coherent state $|1\rangle_c$ that contains $1$ photon on average is given by
\begin{equation}
  W_{|1\rangle_c\langle 1|_c} (r) = \frac{1}{\pi} \e^{-(r_1-1)^2-r_2^2} .
\end{equation}	
With this one finds for the effective detector efficiency
\begin{equation}
  \begin{split}
    &1 - \langle 1|_c\Pi_0|1\rangle_c = \\
    &1 - \frac{1}{2 \sqrt{\pi}} \int_{-R}^R \dd p \e^{-p^2} \left( \Erf(1+(R^2-p^2)^{1/2}) -\Erf(1-(R^2-p^2)^{1/2}) \right) .
  \end{split}
\end{equation}
For $R=1.6$, say, one gets a reasonable dark count rate of
$\langle 0|\Pi_1|0\rangle=0.0773$
and
$\langle 1|_c\Pi_0|1\rangle_c =0.7104$ 
, so an effective detector efficiency of 
$0.2896$.
These values are not that far off from those achieved in current experiments (see, e.g., Ref.~\cite{Detectors,Detectors2}).

In the setting considered here, the Wigner functions $W_{|0\rangle\langle 0|}$ and $W_\omega$ of the two single mode input states from which the initial state is constructed, that of the partial transposed of the Jamiolkowski isomorphs of all gates $W_{f^\Gamma_j},\ j\in [m^2]$, as well as that of the POVM elements $W_{\Pi_0}$ and $W_{\Pi_1}$ are non negative.
Therefore, the algorithms of Refs.~\cite{Mari,Emerson1} can be applied.
The detailed error analysis of Refs.~\cite{Emerson1,Emerson2} implies the following.

\begin{observation}[Efficient sampling in 1-norm for imperfect detectors] 
  For any number of modes $m$, any $R>0$, any Gaussian product input state $\rho$, in which each mode is prepared in either the vacuum $|0\rangle\langle 0|$ or an arbitrary Gaussian state $\omega$, any linear optical network, and any dichotomic detector with POVM elements $\Pi_0$ and $\Pi_1$ as defined in Eq.~\eqref{Wi}, one can sample from the output distribution over $S \in \Phi_{m,n}$, to an error $\epsilon$ in 1-norm with effort $\landauO(\poly(m/\epsilon))$.
\end{observation}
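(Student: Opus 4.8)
The plan is to leave the Hilbert-space description entirely and work in the phase-space (Wigner) picture, exploiting the principle behind Refs.~\cite{Mari,Emerson1,Emerson2}: a continuous-variable computation whose input state, gates and measurement operators \emph{all} have non-negative Wigner functions is nothing but a classical Markov process over phase space and can therefore be simulated by sampling. Accordingly, I would first express the probability of any outcome as a single phase-space integral in which the Wigner function $W_\rho$ of the input state, the Wigner functions $W_{f^\Gamma_j}$ of the (partial-transposed Jamiolkowski isomorphs of the) gates, and the Wigner function $W_{\Pi_s}$ of the POVM element corresponding to that outcome are composed via the Born rule. The statement then reduces to two tasks: (i) checking that every Wigner function appearing in this integral is non-negative and can be sampled from efficiently, and (ii) controlling the 1-norm error of the resulting sampler over the large outcome space.

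For task (i) I would collect the positivity facts already assembled above. Under Condition~1 the input is a Gaussian product state, so each single-mode factor --- vacuum $|0\rangle\langle 0|$ or the Gaussian state $\omega$ --- has a Gaussian, hence non-negative, Wigner function; under Condition~2 the network is a Gaussian channel, whose Jamiolkowski isomorph is a Gaussian state, so the $W_{f^\Gamma_j}$ are Gaussian and non-negative; and under Condition~3 the detector elements are built so that $W_{\Pi_0}$ is the non-negative box function of Eq.~\eqref{Wi}, with $W_{\Pi_1}=W_{\1}-W_{\Pi_0}$ non-negative as well. Each of these densities is elementary to draw from --- Gaussians by standard Gaussian sampling, the box by drawing uniformly from the disk of radius $R$ --- and the trace-preservation of the channel guarantees that the transition kernels attached to the gates are genuine normalised Markov kernels. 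Hence a single trajectory (draw an initial phase-space point from $W_\rho$, propagate it through the network by sampling the gate kernels, then assign an outcome by sampling the POVM weights $\{W_{\Pi_0},W_{\Pi_1}\}$ mode by mode) costs only $\landauO(\poly(m))$ work and returns a valid sample of the outcome.

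The delicate step --- and the one I expect to be the main obstacle --- is task (ii). One cannot build an $\epsilon$-accurate histogram of a distribution over the super-exponentially large space $\Phi_{m,n}$ from polynomially many samples, so the 1-norm guarantee must come instead from bounding the error introduced by the finite-precision implementation of the continuous phase-space sampling (truncation of Gaussian tails, numerical discretisation of the kernels and weights) rather than from statistical convergence of an empirical distribution. Here I would simply invoke the detailed error analysis of Refs.~\cite{Emerson1,Emerson2}, which shows that because every ingredient is a bounded, non-negative density with finite second moments, the accumulated deviation can be pushed below $\epsilon$ in 1-norm with effort scaling polynomially in $m$ and in $1/\epsilon$; the polynomial dependence on $m$ reflects the product structure of the computation over the $m$ modes. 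Combining the positivity and efficient samplability of task~(i) with this error bound yields the claimed $\landauO(\poly(m/\epsilon))$ sampling algorithm.
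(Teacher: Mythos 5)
Your proposal follows essentially the same route as the paper: the paper likewise observes that $W_{|0\rangle\langle 0|}$, $W_\omega$, the Wigner functions $W_{f^\Gamma_j}$ of the partial-transposed Jamiolkowski isomorphs of the gates, and $W_{\Pi_0}$, $W_{\Pi_1}$ are all non-negative, and then directly invokes the classical sampling algorithms of Refs.~\cite{Mari,Emerson1} together with the detailed error analysis of Refs.~\cite{Emerson1,Emerson2} for the $\landauO(\poly(m/\epsilon))$ bound. Your trajectory-level description of the phase-space Markov process and your remark that the $1$-norm guarantee concerns the sampler's output distribution (via truncation/discretisation error) rather than an empirical histogram over $\Phi_{m,n}$ are correct elaborations of what the paper leaves implicit in those citations.
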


That is to say, one can efficiently simulate the output distribution of imperfect linear optical networks and imperfect detectors of the above type even up to a small error in 1-norm.
We suggest that it should be an important and constructive enterprise to exactly flesh out how far one can go with
approximating realistic experimental devices, while still being able to provably efficiently simulate the output distribution.

\section{Conclusion and outlook}
In this work, we have revisited the Boson-Sampling problem from the perspective of sample complexity.
We have arrived at the ironic conclusion that no symmetric probabilistic algorithm can distinguish the Boson-Sampling distribution from the mere uniform distribution on the collision-free subspace, unless exponentially many samples are available.
The specifics of the problem if a priori knowledge is available have been discussed carefully.
We have also addressed the question to what extend imperfect, approximate physical realisations of the Boson-Sampling problem can be classically efficiently simulated up to a constant error in 1-norm.
As such, our work emphasizes the challenge of identifying ways to certify the correct working of such quantum simulators.
Our results indicate that even though, unquestionably, the Boson-Sampling distribution has an intricate structure that makes sampling from it a classically hard problem, this structure seems inaccessible by classical means.
To develop a portfolio of methods for \emph{certifying the correct functioning of quantum simulators} seems timelier than ever.
Probably, quantum methods are indispensable to achieve that goal.

The question of the precise boundary of classically simulatable quantum processes remains wide open and interesting, and is also enjoying an increasing amount of attention \cite{Hoban13,Wiebe13}, not least because of the rather loud claims made in the context of the discussion on the functioning of the D-Wave processor and their careful assessment \cite{DW1,DW2,DW3}.
It is the hope that the present work can contribute to a thoughtful scientific reasoning on identifying the boundary of classically simulatable processes in general, and at the same time contribute to clarifying in what precise sense quantum devices such as Boson-Samplers are indeed more powerful than classical devices.

Obviously, technically, our argument leaves significant room for improvement.
It would be interesting to see, for example, whether, or to what extent, a priori knowledge on the distribution can be used or what other important features of the Boson-Sampling distribution may be identified.
It would also be important to see how the hardness argument can be partially \emph{de-randomised}, and the Haar-measure random unitaries replaced by appropriate \emph{unitary designs} or related concepts derived from \emph{quantum expanders}.

We also hope that our work can be read as yet another invitation to the enterprise of looking at the sample complexity of tasks in quantum theory.
Quite generally, all information that is ever available in any quantum mechanical experiment is obtained from samples from a certain distribution.
These samples may be used to infer about important features or properties of the underlying quantum state --- or even about the very identity of the state in the first place. The quantum state tomography problem --- the inference about an unknown quantum state from measurement data alone --- should be phrased as a sampling problem.
Indeed, the \emph{tomography problem} has already been faithfully viewed as a sampling problem and the sample complexity lower bounded, both in the context of \emph{quantum compressed sensing} \cite{Flammia} and in notions of \emph{reliable quantum state tomography} \cite{Christandl}.
A similar mindset has been taken in foundational arguments explaining the apparent emergence of ensembles of quantum  statistical mechanics based on microscopic unitary evolution \cite{Ududec}: 
Indeed, one may argue that if by sampling alone, one cannot operationally distinguish a situation from the one predicted by a statistical ensemble, then the apparent emergence may be considered explained.
It is the hope that the methods discussed in this work suggest further applications along these lines.

\section{Acknowledgments}
We warmly thank Fernando G.~S.~L.\ Brandao, Earl T.\ Campbell, and Rodrigo Gallego for insightful discussions and Scott Aaronson and Alex Arkhipov for useful criticism.
We thank the EU (Q-Essence, REQS Marie Curie IEF No 299141), the ERC (TAQ), the EURYI, the BMBF (QuOReP), and the Studienstiftung des Deutschen Volkes for support.

\end{document}